\newcommand{\be}{\begin{equation}}
\newcommand{\ee}{\end{equation}}
\newcommand{\N}{\mathcal{N}}
\newcommand{\superimpose}[2]{%
  {\ooalign{$#1\@firstoftwo#2$\cr\hfil$#1\@secondoftwo#2$\hfil\cr}}}
\newtheorem{prop}{Proposition}[section]
\newtheorem{definition}{Definition}[section]
\title{\boldmath Gauge $\times$ Gauge $=$ Gravity on Homogeneous Spaces using Tensor Convolutions}
\author[a]{L. Borsten,}
\author[b,1]{I. Jubb\note{Corresponding author.},}
\author[]{V. Makwana,}
\author[c]{S. Nagy}
\affiliation[a]{Maxwell Institute for Mathematical Sciences,\\
Department of Mathematics, Heriot-Watt University,\\
 Colin Maclaurin Building, Riccarton, Edinburgh EH14 4AS, United Kingdom}
\affiliation[b]{School of Theoretical Physics, Dublin Institute for Advanced Studies,\\
10 Burlington Road, Dublin 4, Ireland}
\affiliation[c]{Queen Mary University of London, 327 Mile End Road, London E1 4NS, United Kingdom}
\emailAdd{l.borsten@hw.ac.uk}
\emailAdd{ijubb@stp.dias.ie}
\emailAdd{visheshmakwana@gmail.com}
\emailAdd{s.nagy@qmul.ac.uk}
\abstract{A definition  of a convolution of tensor fields on group manifolds is given, which  is then generalised to generic  homogeneous spaces. This is applied to the product of gauge fields in the context of `gravity $=$ gauge $\times$ gauge'. In particular, it is shown that the linear  Becchi--Rouet--Stora--Tyutin (BRST) gauge transformations of two Yang-Mills gauge fields generate the linear BRST diffeomorphism transformations of the graviton. This facilitates the definition of the `gauge $\times$ gauge' convolution product on, for example, the static Einstein universe, and more generally for ultrastatic spacetimes with compact spatial slices.}
\begin{document} 
\maketitle

\section{Introduction}
We introduce a convolution of tensor fields on group  manifolds, which to the best of our knowledge has not been treated in the literature to date.  This is generalised to homogeneous spaces, extending  the special case of $S^2$ given in \cite{Borsten:2019prq}. These convolutions are then  applied to the notion of `gravity $=$ gauge $\times$ gauge'. In particular, it is shown that the symmetric convolution of two gauge potentials yields a graviton. The gauge symmetry BRST transformations    generate via the convolution the diffeomorphism BRST transformations of the graviton, to linear order. This allows us to apply the field theoretic  `gravity $=$ gauge $\times$ gauge' construction of \cite{Anastasiou:2014qba, Borsten:2017jpt, Anastasiou:2018rdx, Zoccali:2018pty, Borsten:2019prq, Borsten:2020xbt,Luna:2020adi,Cardoso:2016amd,Cardoso:2016ngt} on curved spacetime backgrounds, such as the $D=4$ spacetime dimensional Einstein universe. We also note that the convolution of tensor fields on homogeneous spaces is of intrisic interest and may have broader applications, cf. for example \cite{Dokmanic:2010, Chakraborty:2018h, Cohen:2018general} and the references therein. 

\paragraph{Gravity $=$ gauge $\times$ gauge} Let us briefly review the theme of `gravity $=$ gauge $\times$ gauge'. For  details and more complete references see the reviews \cite{Bern:2019prr, Borsten:2020bgv}. From the Kawai-Lewellen-Tye (KLT) relations of string theory \cite{Kawai:1985xq}, which relate closed string tree amplitudes  to the sums of products of open string tree amplitudes, we learn that the tree amplitudes of perturbatively quantised $\N=0$ supergravity (Einstein-Hilbert gravity, coupled to a dilaton $\varphi$ and a Kalb-Ramond 2-form $B$)  are the `square' of Yang-Mills amplitudes.  While the KLT relations are intrinsically tree level, it was shown in certain examples that  this relation could be extended to low loop orders \cite{Bern:1993wt,Bern:1998ug}. This programme was dramatically advanced with the Bern-Carrasco-Johansson colour/kinematic (CK) duality conjecture \cite{Bern:2008qj}: the gluon amplitudes can be cast in a form such that their `kinematic numerators' (Lorentz invariant polynomials of momenta and polarisation tensors) obey the same relations as their  `colour numerators' (polynomials of the gauge group structure constants).  CK duality has been shown to hold at tree level from a number of points of view \cite{Stieberger:2009hq,BjerrumBohr:2009rd,BjerrumBohr:2010hn, Feng:2010my,  Chen:2011jxa, Mafra:2011kj, Du:2016tbc, Mizera:2019blq, Reiterer:2019dys}. It remains conjectural at loop level where it quickly becomes difficult to test \cite{Bern:2017yxu, Bern:2017ucb}, although there are numerous highly non-trivial examples \cite{Bern:2010ue,Bern:2010tq,Carrasco:2011mn,Bern:2011rj,BoucherVeronneau:2011qv,Bern:2012cd,Bern:2012gh, Bern:2012uf,Du:2012mt,Yuan:2012rg,Bern:2013uka, Boels:2013bi,Bern:2013yya,Bern:2013qca, Bern:2014sna, Mafra:2015mja, Johansson:2017bfl,  Bern:2017ucb}. 

If  CK duality is satisfied by a Yang-Mills amplitude integrand, then its \emph{double-copy} is an amplitude integrand of $\N=0$ supergravity \cite{Bern:2010ue, Bern:2010yg}. This suggests a fundamental relationship between gauge theory and gravity, at least perturbatively, and reveals new features and puzzles regarding the properties of perturbative quantum gravity \cite{Bern:2012cd,Bern:2012uf,Bern:2014sna}. This motivates  some natural questions. Does CK duality and/or the double-copy hold to all orders in perturbation theory; is the double-copy special to amplitudes; can it be taken beyond perturbation theory; are there applications beyond the computation of gravity amplitudes; what are implications for quantum gravity? There are many approaches that one might take in addressing these challenges. For instance, there is an ambitwistor string approach to CK duality and the double-copy, related to the scattering equation formalism for the double-copy \cite{Cachazo:2013iea, Cachazo:2014xea, Mason:2013sva, Adamo:2013tsa}.   Beyond perturbation theory, for particular (e.g.~Kerr-Schild) spacetimes there is a non-perturbative  classical double-copy of Yang-Mills solutions \cite{Monteiro:2014cda,Luna:2015paa, Luna:2016due,Alawadhi:2019urr,Banerjee:2019saj}, with an elegant variation relating the square of Yang-Mills field strengths to the Weyl tensor \cite{Luna:2018dpt,Alawadhi:2019urr, White:2020sfn,  Monteiro:2020plf, Chacon:2021wbr}. As to applications, there is a vigorous and promising programme to bend  amplitudes and the double-copy to the problem of classical black hole  scattering in the context of gravity wave astronomy \cite{Bern:2019crd, Bern:2019nnu,Bern:2020buy,Bern:2020gjj,Bern:2020uwk,Bern:2021dqo}. CK duality and the double-copy even find applications in fluid dynamics \cite{Cheung:2020djz}. In \cite{Campiglia:2021srh}, a double copy for asymptotic symmetries led to the discovery of a new symmetry for self-dual YM at null infinity, identified as the single copy of gravitational superrotations.

Two ideas that are central to our present discussion are (i) that the double-copy can be applied off-mass-shell to the fields of two gauge theories \cite{Bern:2010yg, Anastasiou:2014qba} and (ii) that this should be extended to all the fields of the BRST complex, including the longitudinal and ghost modes \cite{Anastasiou:2018rdx}. Regarding  (i), it was shown that  CK duality for physical gluon tree-level amplitudes can  be made manifest order-by-order at the Lagrangian level \cite{Bern:2010yg, Tolotti:2013caa}. One can then double-copy the CK dual Lagrangian itself, yielding a theory that produces the correct tree level scattering amplitudes for $\N=0$ supergravity, as was shown  to six points in \cite{Bern:2010yg}. In \cite{Anastasiou:2014qba, Anastasiou:2018rdx} an \emph{a priori} off-shell convolution product of the fields, including those of the BRST complex, of two independent gauge theories was introduced. For two pure BRST Yang-Mills theories in flat space this yields the BRST complex and linear BRST transformations of perturbative $\N=0$ supergravity \cite{Anastasiou:2018rdx}. Combined with the Lagrangian double-copy \cite{Bern:2010yg} the pure Einstein-Hilbert action to cubic order was derived from that of Yang-Mills theory using the convolution product \cite{Borsten:2020xbt}\footnote{See also \cite{Ferrero:2020vww} for the use of the convolution product to construct the $\mathcal{N}=0$ supergravity.}. The chief advantages of the convolution, including the extra BRST fields, are that (i) the ghost sector  allows the dilaton to be  truncated without imposing further constraints on the graviton \cite{LopesCardoso:2018xes, Borsten:2020xbt} and (ii) the Yang-Mills gauge choice  determines the $\N=0$ supergravity gauge choice to linear order, removing the ambiguities inherent in, for example, the double-copy of gauge dependent solutions. In \cite{Borsten:2020zgj,Borsten:2021hua} it was shown that tree-level CK duality holds for amplitudes involving all states of the BRST Fock space, including the unphysical longitudinal gluon and ghost modes, and that this can be made manifest in a purely cubic Yang-Mills BRST-Lagrangian\footnote{Although tree-level CK duality holds for all states of the extended BRST Fock space, this does not necessarily imply loop-level CK duality. However,  the gluon loop amplitude integrands computed  with the Feynman diagrams of the manifest BRST-CK dual Yang-Mills action of \cite{Borsten:2020zgj,Borsten:2021hua} provide `almost BCJ numerators' that double-copy correctly into the loop amplitude integrands of $\N=0$ supergravity \cite{Borsten:2020zgj,Borsten:2021hua}.}. It was then shown that its Lagrangian double-copy yields a BRST-Lagrangian that is pertubatively quantum equivalent to $\N=0$ supergravity to all orders, tree and loop \cite{Borsten:2020zgj, Borsten:2021hua}. A direct corollary is that all tree and loop Yang-Mills amplitude integrands can be written in a form that double-copies correctly, i.e.~yields a bona fide $\N=0$ supergravity amplitude.  Let us emphasise that key to all of the preceding discussion was the derivation of the linear BRST operator  of the double copy theory from  the BRST operators of the  gauge theory factors \cite{Anastasiou:2018rdx}. 

\paragraph{Curved backgrounds} Everything  till  now has assumed perturbation theory around a flat spacetime (and gluon) background. It is natural to ask what of the double-copy survives in curved spaces, or at least in some suitable class of curved spaces. There are various possibilities.   One could consider a flat gluon background on a curved spacetime, a curved gluon background on flat spacetime or both non-trivial gluon and spacetime backgrounds. For example, CK duality and the double-copy  in a curved `sandwich' plane-wave gluon background was considered in \cite{Adamo:2017nia,Adamo:2020qru}. The Kerr-Schild version of the double copy also admits a formulation on curved backgrounds \cite{Bahjat-Abbas:2017htu,Alkac:2021bav}. There have also been generalisations to AdS and conformal correlators \cite{Farrow:2018yni,Lipstein:2019mpu,Armstrong:2020woi,Albayrak:2020fyp,alday2021gluonp}. 

On the other hand, in \cite{Borsten:2019prq} a convolution product for the BRST complexes of two gauge theories on a spatial sphere (with  trivial gluon background) was introduced. Trivially including a time dimension, this provided a convolution product in the $D=3$ Einstein universe. It was shown that the linear BRST transformations correctly double-copy to those of perturbative $\N=0$ supergravity  on an $D=3$ Einstein universe background. 

It was noted in \cite{Borsten:2019prq} that the convolution of tensor fields on $S^2$ relied on rather generic  properties of homogeneous spaces. Encouraged  by this observation, in the present contribution we consider the convolution product for tensor fields on  general  Riemannian  homogeneous spaces. Trivially including a time dimension facilitates the double-copy of the BRST transformations on a broad class of spacetimes such as the $D=4$ Einstein universe.

\paragraph{Structure} We proceed as follows. In \autoref{brst} we recall the essentials of the BRST  formalism in the context of Yang-Mills theory and $\N=0$ supergravity. For complete details, see \cite{Borsten:2021hua}. In \autoref{brstprod} we summarise what is required of the convolution product with respect to the goal of generating the linear diffeomorphism BRST transformations from the product of the Yang-Mills BRST transformations. Specialising to ultra-static spacetimes, reduces the problem to defining a convolution on Riemannian  homogeneous manifolds, as discussed in \autoref{spatial}.

Our construction of a tensor field convolution on homogeneous spaces  relies upon a simpler convolution on group manifolds $G$, which may be regarded as special class of homogeneous manifolds $G\cong (G\times G)/G$. Accordingly, we first formulate  in \autoref{group} the tensor convolution on compact Lie groups. The features of group manifolds essential to the formulation of our tensor field convolution are reviewed in \autoref{groupprelim}. The convolution is introduced in \autoref{convgroup} and  it properties under differentiation are determined in \autoref{diffgroup}. The convolution of functions on group manifolds is straightforward and well-known. It essentially relies on the existence of a $G$-bi-invariant measure (the Haar measure). The obvious obstruction to extending this construction to tensor fields is the need to compare tensors at different points on $G$. However,  a basic property of Lie groups is that the left and right multiplication diffeomorphisms generate a set of left and right invariant basis vectors on $T_gG$ fo all $g\in G$,  starting from some choice of basis of vectors at the identity $T_eG$.   Using this observation we introduce maps that generate left- and right-invariant vector fields from any vector at any point. This allows us to translate any vector field in a unique manner, facilitating a convolution of tensor fields as defined in \autoref{convgroup}. Having introduced the convolution, its properties  are developed. In particular, it is shown in \autoref{diffgroup} that the symmetrised covariant derivative has the same properties as the flat space derivative acting on the flat space convolution when act on the convolution of functions with 1-forms. This implies that the BRST transformations of the graviton  follow from those of the Yang-Mills gauge potentials to linear order.   Finally, the generalisation of the familiar Convolution Theorem is given in \autoref{convthrm} 

Having treated the special case of Lie groups, we turn our attention to Riemannian  homogeneous manifolds $M\cong G/H$. In \autoref{homrev} we review the basics of homogeneous spaces. The group manifold convolution is then lifted to one defined on any Riemannian  homogeneous space in \autoref{convhom}. The basic idea is to regard $G$ as the fibre bundle $H\rightarrow G\rightarrow M$, using the projection $\pi: G\rightarrow M$ to  `pullback' the convolution of $p$-form fields on $M$ to the convolution defined on $G$ and then projecting back down to $M$.   The action of the symmetrised covariant derivative on the symmetrised convolution is considered in \autoref{diffhom}  and  shown to have the same properties as the group manifold case, implying that the graviton BRST transformations are generated correctly.

\section{Squaring  BRST}
\subsection{BRST Review}\label{brst}
 The `square' of pure Yang-Mills theory
 \be
S_{\text{YM}}=\frac{1}{2 g_{\rm YM}^2}  \int   {\rm tr} F\wedge  \star F, \qquad F = DA:=dA+A\wedge A 
\ee
ought to correspond to  the    universal Neveu-Schwarz  sector of the $\alpha'\rightarrow 0$ limit of closed string theories, 
\be
S_{\N=0} =\frac{1}{2\kappa^2} \int \star  \left(R -(D-2)e^{\frac{4}{D-2}\varphi}\Lambda \right)-\frac{1}{(D-2)}  d\varphi \wedge \star  d\varphi - \frac{1}{2}e^{-\frac{4}{D-2}\varphi}  H \wedge  \star  H,
\ee
where $2\kappa^2=16\pi G_\text{N}^{(D)}$. Aside from the metric $g$ and cosmological constant $\Lambda$, we have the dilaton $\varphi$ and the Kalb-Ramond  (KR) 2-form $B$ with field strength $H=dB$. This is sometimes referred to as  $\N=0$ supergravity, for short. For a Minkowski background this follows from the relationship between the tree-level BCJ double-copy and the KLT relations of string theory \cite{Kawai:1985xq}. 

We shall be concerned with relating the linearised BRST transformations of Yang-Mills theory to those $\N=0$ supergravity. So, let us briefly review the linearised  BRST  formalism here. Of course, for free gauge and gravity theories, the ghosts decouple and there is no need to pass through BRST. However, it \emph{is} nonetheless important in the context of squaring Yang-Mills theory \cite{Anastasiou:2014qba, Anastasiou:2018rdx, Borsten:2020xbt, Borsten:2020zgj, Borsten:2021hua}.

\paragraph{Yang-Mills theory}
The BRST complex consists of the ghost number ${\sf gh}=0$ gauge potential $A$, its ${\sf gh}=1$ ghost $c$, and the trivial pair of the  ${\sf gh}=0$ Nakanishi-Lautrup auxiliary field $b$ and ${\sf gh}=-1$ antighost $\bar c$. The linearised ${\sf gh}=1$ off-shell nilquadratic $Q_{\rm YM}^{2}=0$ BRST transformations are given by
\begin{subequations}\label{BRST_YM}
\begin{align} \label{BRST_YM}
Q_{\rm YM} A &= dc,\\
Q_{\rm YM} c &= 0,\\
Q_{\rm YM} \bar{c}&= b,\\
Q_{\rm YM} b &=0.
\end{align}
\end{subequations}
The physical states are contained in the cohomology of $Q_{\rm YM}$.

For a given gauge-fixing condition, the linearised $Q_{\rm YM}$-invariant BRST action can be written
\be
S^{\rm lin}_{\text{YM, BRST}}={\rm tr} \int     \left(\frac12 dA \wedge  \star dA  +Q_{\rm YM} \Psi_A\right)
\ee
where $\Psi_A$ is the ghost number $-1$ gauge-fixing fermion
\be
\Psi_A =    {\rm tr}  \bar c (G[A] -  \frac\alpha2 b)
\ee
for gauge-fixing function $G[A]$ with Gaussian width $\alpha\in\mathds{R}$. A typical choice of gauge-fixing are the $R_\alpha$-linear gauges $G[A]=-d^\dagger A = {\rm div} A$. This yields 
\be
S^{\rm lin}_{\text{YM, BRST}}=    \int  {\rm tr} \left( \frac12 dA \wedge  \star dA + \star b G[A] - \star \frac{\alpha}{2} b^2   -   \star  \bar{c} G[dc] \right),
\ee
which upon eliminating $b$ gives 
\be
S^{\rm lin}_{\text{YM, BRST}}=    \int  {\rm tr} \left( \frac12 dA \wedge  \star dA + \star \frac{1}{2\alpha} G[A]^2 -  \star \bar{c}  G[dc]   \right),
\ee
with 
\be
Q_{\rm YM} \bar c = \frac1\alpha G[A]. 
\ee

\paragraph{$\N=0$ supergravity} We consider the linearisation around some arbitrary background metric on a $D=(d+1)$-dimensional Lorentzian manifold $M$. More explicitly, we consider a one-parameter family of metric and dilaton fluctuations, 
\be
g(\kappa) = g + \kappa h +\mathcal{O}(\kappa^2), \qquad \phi(\kappa) = \phi_0 + \kappa \varphi +\mathcal{O}(\kappa^2)
\ee
where $g$ and $\phi_0$ are a background metric and dilaton  solving the  Einstein and scalar equations of motion and for notational convenience we consider $\kappa$ as our parameter. In fact, we shall consider arbitrary background metrics through the inclusion of arbitrary sources, but they will be treated only implicitly.  Then 
\be\label{limN=0}
\begin{split}
S^{\rm lin}_{\N=0} &:=\lim_{\kappa\rightarrow 0} S_{\N=0}(\kappa)\\
& = \int dx^D\sqrt{-g}\left( \mathcal{L}_{\rm FP} -\tfrac{1}{(D-2)}  d\varphi \wedge \star  d\varphi - \tfrac{1}{2} H \wedge  \star  H+\mathcal{L}[g, \phi_0, \Lambda, \varphi, h]\right),
\end{split}
\ee
where $\mathcal{L}_{\rm FP}$ is the Fierz-Pauli action quadratic in $h_{\mu\nu}$ and $\mathcal{L}[g, \phi_0, \Lambda, \varphi, h]$ is linear in the fluctuations $\varphi$ and $h$.

The Fierz-Pauli action has a gauge symmetry (the residue of diffeomorphism invariance upon taking the limit \eqref{limN=0}),
\be
\delta h = \nabla \xi,  
\ee
or in components 
\be
\delta h_{\mu\nu} = 2\nabla_{(\mu} \xi_{\nu)},
\ee
where $\nabla_\mu$ is the covariant derivative with respect to the Levi-Civita connection of the background metric $g$. 

Here we have introduced the coordinate independent symmetrised derivative $\nabla$. For any function $f$ it is defined by  $\nabla f := df \in T^* M$, where $d$ is the exterior derivative, and for any $\omega\in T^* M$ we define it as
\begin{equation}\label{eq:gen_def_symm_cov_deriv}
(\nabla \omega )(X,Y) := (\nabla_{X}\omega )(Y) + (\nabla_{Y}\omega )(X) \;\;\; ,
\end{equation}
for any $X, Y \in TM$. Note that the symmetrised covariant derivative $\nabla$ has the same symbol as the usual covariant derivative $\nabla_X$ with respect to $X \in TM$ ($X^\mu \nabla_\mu$ in components), but any ambiguity between the two can be resolved by the fact that the symmetrised covariant derivative does not take a subscript argument.

We work in Einstein frame so the dilaton is a scalar. The Fierz-Pauli
 BRST complex consists of the ghost number ${\sf gh}=0$ gauge potential $h$, the 1-form diffeomorphism  ${\sf gh}=1$ ghost $\xi$, and its accompanying  1-form trivial pair of the  ${\sf gh}=0$ Nakanishi-Lautrup auxiliary field $\pi$ and ${\sf gh}=-1$ diffeomorphism antighost $\bar \xi$. The Kalb-Ramond 2-form has a reducible gauge symmetry.  In addition to the ghost number ${\sf gh}=0$ gauge potential $B$, there is  the 1-form   ${\sf gh}=1$ ghost $\Lambda$ and  the scalar ${\sf gh}=2$ ghost-for-ghost $\lambda$, and their accompanying   1-form $W, \bar \Lambda$ and 0-form $w, \bar \lambda$ trivial pairs and a final ${\sf gh}=0$ ghost $\eta$.

 The linearised ${\sf gh}=1$ off-shell nilquadratic BRST transformations are given by
\begin{subequations}\label{BRST_grav}
\begin{align} \label{BRST_grav}
 Q_{\mathcal{N}=0} h &= \nabla \xi,\\
 \label{BRST_grav_xi}
Q_{\mathcal{N}=0}\xi&= 0,\\
Q_{\mathcal{N}=0}\bar{\xi}&= \pi,\\
Q_{\mathcal{N}=0} \pi &=0.
\end{align}
\end{subequations}
and $Q_{\mathcal{N}=0}\phi=0$. For the Kalb-Ramond sector see for example \cite{Anastasiou:2018rdx}. 
The physical states are contained in the cohomology of $Q_{\mathcal{N}=0}$.

We focus on the Fierz-Pauli sector. For a given gauge-fixing condition, the linearised $Q_{\mathcal{N}=0}$-invariant BRST action can be written
\be
S^{\rm lin}_{{\rm FP}, \text{BRST}}= \int dx^D\sqrt{-g}  \left(\mathcal{L}_{\rm FP} +Q_{\mathcal{N}=0} \Psi_{h}\right)
\ee
where $\Psi_h$ is the ghost number $-1$ diffeo-gauge-fixing fermion
\be
\Psi_h =     \bar \xi (G[h, \varphi] -  \frac\zeta 2 \pi)
\ee
for gauge-fixing function $G[h, \varphi]$ with Gaussian width $\zeta\in \mathds{R}$. A typical choice of gauge-fixing function is de Donder  gauges $G[A]={\rm div} (h - \frac12 g {\rm tr} h)$, where the trace is taken with respect to the the background metric $g$. This yields 
\be
S^{\rm lin}_{{\rm FP}, \text{BRST}}= \int dx^D\sqrt{-g}  \left(\mathcal{L}_{\rm FP} +\pi \wedge \star G[h, \varphi] - \star \frac\zeta 2 \pi^2 - \bar \xi Q_{\N=0}G[h, \varphi]\right),
\ee
which upon eliminating $\pi$ gives 
\be
S^{\rm lin}_{{\rm FP}, \text{BRST}}=    \int dx^D\sqrt{-g}  \left(\mathcal{L}_{\rm FP}  +\frac{1}{2\zeta}G[h, \varphi] \wedge \star G[h, \varphi]  - \bar \xi Q_{\N=0}G[h, \varphi]\right),
\ee
with 
\be
Q_{\N=0} \bar \xi = \frac1\zeta G[h, \varphi]. 
\ee

\subsection{The Goal: Diffeomorphism BRST from Yang-Mills BRST}\label{brstprod}
We would like to define a product `$\ast$' of fields (i.e.~sections of bundles) and set\footnote{We are ignoring here the bi-adjoint spectator scalar field $\Phi$ \cite{Anastasiou:2014qba}. Since it is a scalar and BRST invariant it can be straightforwardly included. However, since the convolution is not necessarily associative, cf. \autoref{sec:app_associativity}, one must make a choice in defining the product, fixing which field is convoluted with the spectator first. A natural choice is to regard the convolution as a left acting operation.}
\be
h=A\ast \tilde A, \qquad \xi =c\ast \tilde A+A\ast \tilde c.
\ee
Note that since $h\in \mathrm{Sym}(M)$, where $\mathrm{Sym} (M)$ denotes symmetric $(0,2)$-tensor fields on $M$, the product of two 1-forms, $A\ast \tilde{A}$, must output a symmetric tensor. In order to recover the symmetries of linearised gravity, \eqref{BRST_grav} and~\eqref{BRST_grav_xi}, using the symmetries of the two gauge fields, we require
\begin{equation}
Q (A\ast \tilde A) = \nabla \xi \; , \;\;\; Q( c\ast \tilde A+A\ast \tilde c ) = 0 \; ,
\end{equation}
where $Q$ denotes the double-copy transformation that acts either as $Q_{\rm YM}$ or $Q_{\mathcal{N}=0}$, depending on what fields it acts on.

Since $Q A=dc=\nabla c$, it suffices if the product $\ast$ satisfies 
\be\label{eq:goal_derivative_rule}
\nabla f \ast  g = \nabla ( f \ast g ) = f \ast \nabla g , \quad \nabla f \ast  \omega = \nabla ( f \ast \omega ), \quad  \omega \ast \nabla f  = \nabla ( \omega \ast f ) \, ,
\ee
for any scalars $f,g$, and any 1-form $\omega$. Assuming this we find
\begin{align}
Q (A\ast \tilde A) & = QA \ast \tilde{A} + A \ast Q\tilde{A} \nonumber
\\
& = \nabla c \ast \tilde{A} + A \ast \nabla\tilde{c}
\nonumber
\\
& = \nabla \left(  c \ast \tilde{A} + A \ast \tilde{c} \right)
\nonumber
\\
& = \nabla \xi \;\;\; ,
\end{align}
and 
\begin{align}
Q ( c\ast \tilde A+A\ast \tilde c ) & = - c\ast Q \tilde{A} + QA \ast \tilde{c} \nonumber
\\
& = -  c \ast \nabla\tilde{c} + \nabla c \ast \tilde{c}
\nonumber
\\
& = \nabla \left(  - c \ast \tilde{c} + c \ast \tilde{c} \right)
\nonumber
\\
& = 0 \;\;\; ,
\end{align}
as desired. Note, in the last derivation we have used $Qc=Q\tilde{c}=0$ and the anti-commutativity of $Q$ and the ghost fields.

One might also ask for the derivative rule
\begin{equation}
\nabla ( f \ast \omega ) = f \ast \nabla\omega \;\;\; ,
\end{equation}
for any scalar $f$ and 1-form $\omega$. Such a rule would enable the recovery of the dilaton transformation \cite{Borsten:2019prq}. Below we will see that this rule is more complicated than those in~\eqref{eq:goal_derivative_rule}, and hence we leave it as an open question for future investigations. 

We also note that for the Kalb-Ramond 2-form we have 
\be
Q_{\N=0}B=d\Lambda,
\ee
where $d$ is the exterior derivative. The requirement analogous to the function and 1-form case of~\eqref{eq:goal_derivative_rule} is then
\be
d f \ast  w = d ( f \ast w ), \quad  w \ast d f  = d ( w \ast f ).
\ee
Surprisingly, this also proves to be less straightforward and is left for future work. 

Equation~\eqref{eq:goal_derivative_rule} can be seen as the analog of the familiar Leibniz failure property of convolutions on pseudo-Euclidean spaces $\partial (X* Y) = \partial X* Y = X*\partial Y$, but for the symmetrized covariant derivative $\nabla$ on  $M$. \eqref{eq:goal_derivative_rule} is the main goal, and the rest of the paper will be dedicated to constructing a definition of $\ast$ satisfying~\eqref{eq:goal_derivative_rule} on a wide class of spacetime geometries.

\subsection{From Spacetime to Space}\label{spatial}

Specifically, we shall be concerned with $D=d+1$ ultrastatic spacetimes $\hat{M}=\mathds{R}\times M$, where the Riemannian geometry on $M$ is held fixed over time. This essentially reduces the problem to defining a spatial product on the $d$-dimensional Riemannian space $M$. Hence we shall be concerned only with $d$-dimensional objects, and for notational convenience all $(d+1)$-dimensional objects will be hatted henceforth (this is the first and last time we shall use this notation, for reasons we make clear momentarily). Similarly, we will hat the product on $\hat{M}$, $\ast\rightarrow\hat{\ast}$, and the symmetrised covariant derivative $\nabla \rightarrow \hat{\nabla}$. Any occurrence of the un-hatted `$\ast$' will denote an, as of yet, undefined product over the spatial manifold $M$, and any occurrence of the un-hatted `$\nabla$' will refer to the symmetrised covariant derivative for the Levi-Civita connection on $M$.

Given the ultrastatic nature of $\hat{M}$ we can make the convenient time-space split of any 1-form $\hat{\omega}\in \Omega^1(\hat{M})$, i.e. $\hat{\omega} = \omega_0 \, dt + \omega_1$. Here $\omega_0$ is a function over $\hat{M}$, and as such can vary in space as well as time. $\omega_1$ is spatially-directed 1-form, i.e. $\omega_1 (\partial_t )=0$, which again can vary in space and time. In this way we can equivalently think of $\hat{\omega}$ as the pair $(\omega_0[\cdot] , \omega_1[\cdot] ) \in \Omega^0_{\mathbb{R}}(M) \oplus \Omega^1_{\mathbb{R}}(M)$, where $\omega_0[\cdot]\, : \, \mathbb{R}\rightarrow \Omega^0(M)$ denotes a 1-parameter family of functions on $M$, i.e. $\omega_0[t]\in \Omega^0(M)$ for any $t\in \mathbb{R}$, and $\omega_1[\cdot]\, : \, \mathbb{R}\rightarrow \Omega^1(M)$ denotes a 1-parameter family of 1-forms on $M$, i.e. $\omega_1[t]\in \Omega^1(M)$ for any $t\in \mathbb{R}$. Here $\Omega^p_{\mathbb{R}}(M)$ denotes the space of 1-parameter families of $p$-forms. We will write $\omega_i[\cdot]$ for the $i=0,1$ component of this pair. Note that one can go backwards and construct $\hat{\omega}$ given such a pair, and hence this mapping between $\Omega^1(\hat{M})$ and $\Omega^0_{\mathbb{R}}(M) \oplus \Omega^1_{\mathbb{R}}(M)$ is a bijection. 

We can connstruct similar bijections for all $(0,p)$-tensor fields. For functions is it trivial. Given any $\hat{f}\in \Omega^0(\hat{M})$, the corresponding 1-parameter family of functions on $M$, denoted $f[\cdot] \in \Omega^0_{\mathbb{R}}(M)$, is simply $f[t](x) = \hat{f}(t,x)$ for any $t\in \mathbb{R}$ and $x\in M$. The other relevant case for our purposes is the bijection for $\mathrm{Sym} (\hat{M})$. For any $\hat{\alpha}\in \mathrm{Sym} (\hat{M})$, the time-space split can be written in the form 
\begin{equation}
\hat{\alpha} = \alpha_{0} \, dt \vee dt + 2\, \alpha_{1} \vee dt + \alpha_{2} \;\;\; ,
\end{equation}
where $\vee$ denotes the symmetric tensor product, i.e. $\alpha \vee \beta := \alpha\otimes \beta + \beta \otimes \alpha$. Here $\alpha_{0} \in \Omega^0(\hat{M})$ is a function on $\hat{M}$, $\alpha_{1} \in\Omega^1(\hat{M})$ is a spatially-directed 1-form on $\hat{M}$, i.e. $\alpha_{1} (\partial_t ) = 0$, and $\alpha_{2} \in \mathrm{Sym}(\hat{M})$ is a spatially-directed symmetric $(0,2)$-tensor on $\hat{M}$, i.e. $\alpha_{2} (\partial_t , X) = \alpha_{2} (X , \partial_t) = 0$ for any vector $X\in T\hat{M}$.  It is then clear that $\hat{\alpha}$ is equivalent to the triple $( \alpha_{0}[\cdot] , \alpha_{1}[\cdot],\alpha_{2}[\cdot] ) \in \Omega^0_{\mathbb{R}}(M) \oplus \Omega^1_{\mathbb{R}}(M)\oplus \mathrm{Sym}_{\mathbb{R}}(M)$, where $\mathrm{Sym}_{\mathbb{R}}(M)$ denotes the space of 1-parameter families of symmetric $(0,2)$-tensor fields on $M$. We similarly write $\alpha_{i}[\cdot]$, ($i=0,1,2$) for the corresponding component of the triple.

We are now ready to define the spacetime product $\hat{\ast}$ in terms of some, as of yet, undefined spatial product $\ast$. For a pair of functions $\hat{f},\hat{g}\in\Omega^0(\hat{M})$ we first find the corresponding 1-parameter families $f[\cdot],g[\cdot]\in\Omega^0_{\mathbb{R}}(M)$. The spacetime product, $\hat{f}\, \hat{\ast}\, \hat{g}\in\Omega^0(\hat{M})$, can then be defined through a specification of the corresponding 1-parameter family $(\hat{f}\, \hat{\ast}\,\hat{g}) [\cdot]\in\Omega^0_{\mathbb{R}}(M)$, which, for any $t\in\mathbb{R}$, we define as
\begin{equation}\label{eq:spacetime_product_def_functions}
(\hat{f}\, \hat{\ast}\, \hat{g}) [t] := \int_{-\infty}^{\infty}dt' \, f[t']\ast g[t-t'] \;\;\; ,
\end{equation}
where, by assumption, $\ast$ is defined between functions on $M$, and hence the integrand on the RHS is well-defined. Given the bijection between $\Omega^0(\hat{M})$ and $\Omega^0_{\mathbb{R}}(M)$, this specification of the 1-parameter family uniquely defines the product $\hat{f}\, \hat{\ast}\, \hat{g}\in\Omega^0(\hat{M})$. Modulo the definition of $\ast$, the above equation should be familiar to the reader as the usual Euclidean convolution of functions. 

Given two 1-forms $\hat{\alpha},\hat{\beta}\in\Omega^1(\hat{M})$, we wish to go further and define the spacetime products $\hat{f}\, \hat{\ast}\, \hat{\alpha}\in\Omega^1(\hat{M})$ and $\hat{\alpha}\, \hat{\ast}\, \hat{\beta}\in\mathrm{Sym}(\hat{M})$. Both can be uniquely defined by specifying the corresponding tuples; a pair $\Omega^0_{\mathbb{R}}(M) \oplus \Omega^1_{\mathbb{R}}(M)$ for the former and a triple $\Omega^0_{\mathbb{R}}(M) \oplus \Omega^1_{\mathbb{R}}(M)\oplus \mathrm{Sym}_{\mathbb{R}}(M)$ for the latter. We first compute the components $\alpha_i[\cdot]$ and $\beta_j[\cdot]$, where $i,j=0,1$. We then define the components of the respective tuples, for all $t\in\mathbb{R}$, as
\begin{align}\label{eq:spacetime_product_def_1_forms}
(\hat{f}\, \hat{\ast}\, \hat{\alpha})_i [t] & := \int_{-\infty}^{\infty}dt'\, f[t']\, \ast\, \alpha_i[t-t'] \;\;\; ,
\\
(\hat{\alpha}\, \hat{\ast}\, \hat{\beta})_{i+j} [t] & := \int_{-\infty}^{\infty}dt'\, \alpha_{( i}[t']\, \ast\, \beta_{j )}[t-t'] \;\;\; ,
\end{align}
where the brackets denote the usual symmetrised sum: $T_{(ab)}=\frac{1}{2}(T_{ab}+T_{ba})$. Note that the $\ast$ products in the integrands on the RHS's are between tensor fields on $M$, and hence are well-defined by assumption. One can of course extend the above definition to higher rank tensors, but the expressions becomes more complicated and will not be needed for our purposes. 

Given the above definition of $\hat{\ast}$, we can now show that it satisfies~\eqref{eq:goal_derivative_rule} for $\hat{\nabla}$, assuming the spatial product $\ast$ satisfies~\eqref{eq:goal_derivative_rule} for the spatial symmetrised covariant derivative $\nabla$. To see this we first need to know how $\hat{\nabla}$ acts on a function $\hat{f}\in\Omega^0(\hat{M})$ and a 1-form $\hat{\omega}\in\Omega^1(\hat{M})$ in terms of the components $f[\cdot]\in\Omega^0_{\mathbb{R}}(M)$ and $\omega_i[\cdot]\in\Omega^i_{\mathbb{R}}(M)$ ($i=0,1$).

Recall that $\hat{\nabla}\hat{f} = \hat{d}\hat{f}$, where $\hat{d}$ is the exterior derivative on $\hat{M}$. As $\hat{d}\hat{f}$ is a 1-form on $\hat{M}$, it can be specified by a pair $\Omega^0_{\mathbb{R}}(M)\oplus \Omega^1_{\mathbb{R}}(M)$. One can verify that $(\hat{d}\hat{f})_0 [t] = \partial_t ( f[t])$ and $(\hat{d}\hat{f})_1 [t] = d ( f[t])$, for all $t\in\mathbb{R}$, gives the correct specification, where $d$ is the exterior derivative on $M$. Note that we may write $(\hat{d}\hat{f})_1 [t] = \nabla ( f[t])$, as $\nabla \equiv d$ on $\Omega^0(M)$. If we define the pair of operators $D_0 := \partial_t$ and $D_1 := \nabla$, we can write the components of $\hat{\nabla}\hat{f}$ succinctly as
\begin{equation}\label{eq:nabla_hat_on_functions}
(\hat{\nabla}\hat{f})_i[t] = D_a(f[t]) \;\;\; .
\end{equation}

Since $\hat{\nabla}\hat{\omega}$ is a symmetric $(0,2)$-tensor field, it is uniquely specified by a triple $\Omega^0_{\mathbb{R}}(M) \oplus \Omega^1_{\mathbb{R}}(M)\oplus \mathrm{Sym}_{\mathbb{R}}(M)$, where the components are $(\hat{\nabla}\hat{\omega})_k[\cdot]$ for $k=0,1,2$. One can verify that
\begin{equation}\label{eq:nabla_hat_on_1_forms}
(\hat{\nabla}\hat{\omega})_{i+j}[t] = D_{( i} (\omega_{j )}[t]) \;\;\; ,
\end{equation}
for $t\in\mathbb{R}$ and $i,j=0,1$, gives the correct specification of the components.

\eqref{eq:goal_derivative_rule} can now be verified by direct computation. For two functions we have
\begin{align}
\left(\hat{\nabla}( \hat{f}\,\hat{\ast}\,\hat{g})  \right)_i[t] & = D_i\left((\hat{f}\,\hat{\ast}\,\hat{g})[t]\right)
\nonumber
\\
& = D_i\left( \int_{-\infty}^{\infty}dt'\, f[t']\,\ast\,g[t-t']\right) \;\;\; .
\end{align}
We then note that the $D_i$ can be moved onto either argument of the $\ast$ product in the integrand. For $i=1$ ($D_1 = \nabla$) this is true by linearity of the integral and by our assumption that $\ast$ satisfies~\eqref{eq:goal_derivative_rule} for $\nabla$. For $i=0$ ($D_0 = \partial_t$) it is obviously true for $g[t-t']$, as it is the only part of the integrand that explicitly depends on $t$, and we have assumed bilinearity of the $\ast$ product. To show that it can be moved onto $f[t']$ one can run the same argument again following the change of integration variables, $t'\rightarrow u=t-t'$: 
\begin{equation}
\int_{-\infty}^{\infty}dt'\, f[t']\,\ast\,g[t-t'] = \int_{-\infty}^{\infty}du\, f[t-u]\,\ast\,g[u] \;\;\; .
\end{equation}
We can now write
\begin{equation}
\left(\hat{\nabla}( \hat{f}\,\hat{\ast}\,\hat{g})  \right)_i[t] = \int_{-\infty}^{\infty}dt'\, D_i(f[t'])\,\ast\,g[t-t'] = \int_{-\infty}^{\infty}dt'\, f[t']\,\ast\,D_i(g[t-t']) \;\;\; ,
\end{equation}
where any time derivative (for $i=0$) on the RHS should be understood as a derivative with respect to the time variable of the respective function, e.g. $D_0(f[t']) = \partial_{t'}(f[t'])$. From the definition~\eqref{eq:spacetime_product_def_1_forms}, and~\eqref{eq:nabla_hat_on_functions} we then have
\begin{equation}
\left(\hat{\nabla}( \hat{f}\,\hat{\ast}\,\hat{g})  \right)_i[t] = \left( (\hat{\nabla} \hat{f})\,\hat{\ast}\,\hat{g}  \right)_i[t] = \left( \hat{f}\,\hat{\ast}\,(\hat{\nabla}\hat{g})  \right)_i[t] \;\;\; ,
\end{equation}
which verifies the desired derivative rule for a product of functions. The $\hat{\nabla}$ derivative rule for the $\hat{\ast}$ product of a function and a 1-form on $\hat{M}$ can be verified in a similar manner. Again, one only requires $\ast$ to be bilinear and to satisfy the derivative rule for $\nabla$. The question, then, is whether such a $\ast$ product on $M$ can be constructed. Explicitly, we want~\eqref{eq:goal_derivative_rule} for functions and 1-formsm but on $M$ instead of $\hat{M}$. From~\cite{Anastasiou:2014qba, Anastasiou:2018rdx, Borsten:2019prq}, the expectation is that $\ast$ should correspond to some kind of convolution. On a generic spatial manifold, $M$, this is a difficult question, but on compact homogeneous spaces we are able to give a well defined convolution satisfying \eqref{eq:goal_derivative_rule}. This is what we shall develop in the remaining sections. 

\section{Lie Groups}\label{group}

\subsection{Preliminaries}\label{groupprelim}

\subsubsection{Definitions}

Consider a compact Lie group $G$ and the associated Lie algebra $\mathfrak{g} \cong T_e G$. For any $g\in G$ we denote the corresponding right and left action as $R_g$ and $L_g$ respectively, i.e. $R_g g' = g' g$ and $L_g g' = gg'$. Note that $R_g$ and $L_g$ are diffeomorphisms of $G$.

For any function $f\in \Omega^0(G)$, the associated pull-backs, $R_g^* f, L_g^* f \in \Omega^0(G)$, are defined as
\begin{align}
(R_g^* f )(g') & := f(R_g g') = f(g' g) 
\\
(L_g^* f )(g') & := f(L_g g') = f(g g')  \;\;\; .
\end{align}
For any vector, $X_g \in T_g G$, we define the push-forward, ${R_{g'}}_* \, : \, T_g G \rightarrow T_{gg'}G$, as that which satisfies
\begin{equation}
({R_{g'} }_* X_g )_{gg'}(f) := X_g ( R_{g'}^* f )
\end{equation}
for any function $f\in \Omega^0(G)$. Similarly, for ${L_{g'}}_* \, : \, T_g G \rightarrow T_{g'g}G$, we have
\begin{equation}
({L_{g'} }_* X_g )_{g'g}(f) := X_g ( L_{g'}^* f ) \;\;\; .
\end{equation}
For any 1-form $\omega \in \Omega^1(G)$ we define the pull-back associated with $R_g$ as
\begin{equation}
(R_g^* \omega )(X) := \omega( {R_g}_* X ) \;\;\; ,
\end{equation}
for any $X\in TG$, and similarly for the pull-back $L_g^*$.

We can now define the subspace of vector fields, $R G \subset TG$, that are \emph{right-invariant}. For any $X\in RG$ we have
\begin{equation}
({R_{g'}}_*X_g )_{g g'} = X_{gg'} \;\;\; ,
\end{equation}
for any $g,g'\in G$. Similarly, we define $LG \subset TG$ as the subspace of \emph{left-invariant} vector fields which satisfy
\begin{equation}
({L_{g'}}_*X_g )_{g' g} = X_{g'g} \;\;\; .
\end{equation}
Using the pull-back we can similarly define right/left-invariant $(0,p)$-tensor fields. Given that $G$ is compact, there is a unique (up to a constant rescaling) bi-invariant (right- and left-invariant) metric on $G$, cf. for example \cite{Helgason:2001dif}.

To any $x\in \mathfrak{g}\cong T_e G$ we can associate a left-invariant vector field $X \in LG$ by pushing $x$ forward from $e$ to any $g\in G$ with ${L_g}_*$. That is, $X_g = {L_g}_* x$. This sets up an isomorphism between $\mathfrak{g}$ and $LG$ (and similarly for $RG$). 

Recall that for any $X, Y \in TG$, and any push-forward $f_*$, we have $f_* [X,Y] = [ f_* X , f_* Y]$. This implies that for any two vector fields $X, Y \in LG$, the Lie bracket, $[X , Y ]$, is also left-invariant. The Lie bracket, and the isomorphism between $\mathfrak{g}$ and $LG$, then defines the usual Lie bracket on $\mathfrak{g}$.

Consider some orthonormal (with respect to the bi-invariant metric) basis of the Lie algebra, $\lbrace e_a \rbrace$ ($a=1,..., \text{dim}(G)$). In this basis the structure constants, $f_{ab}^{\;\; c}$, satisfy
\begin{equation}\label{eq:structure_constants}
[e_a , e_b] = f_{ab}^{\;\; c} e_c \;\;\; .
\end{equation}
For any $g\in G$ we have the map $\text{Ad}_g \, : \, \mathfrak{g} \rightarrow \mathfrak{g}$, given by $\text{Ad}_g := {L_g}_*{R_{g^{-1}}}_*$. We can then decompose the vector $\text{Ad}_g e_a$ in terms of the basis $\lbrace e_a \rbrace$ to get
\begin{equation}
\text{Ad}_g e_a = (\text{Ad}_g )_a^{\; b} \; e_b \; ,
\end{equation}
where the components of this decomposition, $(\text{Ad}_g )_a^{\; b}$, are the adjoint representation of $G$ in this basis.

Given the basis $\lbrace e_a \rbrace$ we can form the corresponding orthonormal right- and left-invariant bases over $G$, denoted by $r_a$ and $l_a$ respectively. At any $g\in G$ these basis vectors are given by
\begin{subequations}\label{basis}
\begin{align}
(r_a )_g & := {R_g}_* e_a 
\\
(l_a )_g & := {L_g}_* e_a  \;\;\; .
\end{align}
\end{subequations}
Given some coordinates, $x^{\mu}$, the basis vector field $l_a$ can be decomposed as $l_a(x) = l_a^{\mu}(x) \partial_{\mu} $, and similarly for $r_a$. In this way, $l_a$ and $r_a$ can be thought of as vielbeins on $G$, with respective components $l_a^{\mu}(x)$ and $r_a^{\mu}(x)$. 

One can then verify that the left and right bases are related to each other via
\begin{equation}\label{eq:relation_between_left_and_right_bases}
(l_a )_g = (\text{Ad}_g )_a^{\; b} \; (r_b)_g \; ,
\end{equation}
when evaluated at some point $g\in G$. The bi-invariance of the metric ensures that these bases remain orthonormal, and the fact that $LG$ is closed under the Lie bracket implies that
\begin{equation}\label{eq:RL_structure_constants}
[l_a , l_b ] = f_{ab}^{\;\; c} l_c \;\;\; , \;\;\; [r_a , r_b ] = f_{ab}^{\;\; c} r_c \;\;\; .
\end{equation}

We can use these left- and right-invariant bases to decompose any vector field $X \in TG$:
\begin{equation}
X = X^a_R \, r_a = X^a_L \, l_a \;\;\; ,
\end{equation}
where, for each index $a$, the components $ X^a_R $ and $ X^a_L$ are functions on $G$. We denote by $r^a$ and $l^a$ the dual basis 1-forms. With tensor products one can then expand any tensor field using $r^a$, $l^a$, $r_a$, and $l_a$. 

Given a vector at a point $g\in G$, say $X_g \in T_g G$, we can generate the corresponding left/right-invariant vector field by using the push-forward to move it to every other point $g'\in G$. We define  maps,
\begin{align}
\mathfrak{l}: G\times TG & \rightarrow  LG;\qquad (g, X)\mapsto  \mathfrak{l}_g (X) \nonumber
\\
\mathfrak{r}: G\times TG & \rightarrow  RG;\qquad (g, X)\mapsto  \mathfrak{r}_g (X) \;\;\; ,
\end{align}
that generate left- and right-invariant vector fields out of vectors at $g$~\footnote{The map $\mathfrak{l}_g$ would usually be referred to as the Lie algebra valued \emph{Solder} 1-form, $\theta_g \, : \, T_g G \rightarrow LG$ (here we have identified the Lie algebra with the left-invariant vector fields).}. Explicitly, if we evaluate the fields $\mathfrak{l}_g (X )$ and $\mathfrak{r}_g (X)$ at any point $g'\in G$ we have
\begin{align}\label{lrvect}
\mathfrak{l}_g (X )\big|_{g'} & := ({L_{g'g^{-1}}}_* X_g )_{g'} \nonumber
\\
\mathfrak{r}_g (X )\big|_{g'} & := ({R_{g^{-1} g'}}_* X_g )_{g'} \;\;\; .
\end{align}
The right- and left-invariant basis fields clearly satisfy $\mathfrak{r}_g (r_a ) = r_a$ and $\mathfrak{l}_g (l_a ) = l_a$.

For a 1-form $\omega\in \Omega^1(G)$, we define the maps $\mathfrak{l}_g (\cdot )$ and $\mathfrak{r}_g (\cdot )$ in an analogous way, but with the pull-back instead:
\begin{align}\label{eq:definition_mathfrak_lr_maps}
\mathfrak{l}_g(\omega)\big|_{g'} & = ( L_{gg'^{-1}}^* \, \omega )\big|_{g'} \nonumber
\\
\mathfrak{r}_g(\omega)\big|_{g'} & = ( R_{g'^{-1}g}^* \, \omega )\big|_{g'} \;\;\; .
\end{align}
We can then extend the definition of these maps to all $(0,p)$-tensor fields using the pull-back in the obvious way. Note that, for any function $f \in \Omega^0(G)$, $\mathfrak{l}_g (f)$ and $\mathfrak{r}_g (f)$ are constant functions on $G$, and at any point $g'\in G$ they take the value $f(g)$, i.e. $\left( \mathfrak{l}_g (f)\right) (g') = \left( \mathfrak{r}_g (f)\right) (g') = f(g) $. 

Given any diffeomorphism $f \, : \, G \rightarrow G$, and any $h\in\Omega^0(G)$ and $\omega\in \Omega^1(G)$, the pull-back $f^*$ satisfies
\begin{equation}
f^* \, h \omega  = (f^* h ) (f^* \omega)  \;\;\; ,
\end{equation}
where we recall that the pull-back of a function is defined as $f^* h := h \circ f$. If we write $\omega = (\omega_L)_a \, l^a$, we then have that
\begin{align}\label{eq:left_action_1_form_components}
L_g^* \, \omega  & = L_g^* \, ( (\omega_L)_a  \, l^a ) \nonumber
\\
& = (L_g^* \, (\omega_L)_a ) \, (L_g^* \, l^a ) \nonumber
\\
& = (L_g^* \, (\omega_L)_a ) \,  l^a \;\;\; .
\end{align}
In this basis $L_g^*$ simply acts as the pull-back on the component functions. Similarly we have
\begin{equation}\label{eq:right_action_1_form_components}
R_g^* \,  \omega = (R_g^* \, (\omega_R)_a ) \, r^a \;\;\; .
\end{equation}
One can then verify that
\begin{align}\label{eq:RLg_action_1_form_components}
\mathfrak{l}_g(\omega) & = \mathfrak{l}_g((\omega_L)_a ) \, l^a \nonumber
\\
\mathfrak{r}_g(\omega) & = \mathfrak{r}_g( (\omega_R)_a ) \, r^a \;\;\; .
\end{align}

\subsubsection{Derivatives}\label{diffgroup}

We will mostly be concerned with the covariant derivative, $\nabla$, specifically the torsionless connection that is compatible with the bi-invariant metric. Before discussing $\nabla$ we briefly comment on a useful property of the exterior derivative, $d: \Omega^p(M)\rightarrow \Omega^{p+1}(M)$.

 Using the  bases \eqref{basis} we can write
\begin{equation}\label{eq:d_on_functions}
d f = r_a (f) r^a = l_a (f) l^a \;\;\; ,
\end{equation}
for any $f\in \Omega^0(G)$. For higher degree forms the action of $d$ in the left/right-invariant basis is more complicated, but we will not need it in what follows.

Moving on to the covariant derivative, for any pair of left-invariant (or right-invariant) basis vector fields $l_a, l_b \in L G$, $\nabla$ satisfies
\begin{equation}
\nabla_{l_a}l_b = \frac{1}{2}[l_a,l_b] = \frac{1}{2}f_{ab}^{\;\; c}l_c \;\;\; ,
\end{equation}
For any two vector fields $X, Y \in TG$ we can write them in the $l_a$ basis as
\begin{equation}
X = X^a_L \, l_a \;\;\; , \;\;\; Y = Y^a_L \, l_a \;\;\; .
\end{equation}
Using the algebraic properties of $\nabla$, we have
\begin{align}
\nabla_X Y & = \nabla_{ X} ( Y^a_L \, l_a ) \nonumber
\\
& = \nabla_{ X} ( Y^a_L  ) \, l_a + Y^a_L \nabla_{ X }l_a \nonumber
\\
& = X(Y^a_L)\, l_a + \frac{1}{2} X^a_L Y^b_L f_{ab}^{\;\; c}l_c \;\;\; ,
\end{align}
where the covariant derivative of a function $f\in \Omega^0(G)$, with respect to some vector field $X \in T G$, is defined as $\nabla_{X}f := X (f)$. The covariant derivative of some 1-form $\omega\in\Omega^1(G)$ is defined as
\begin{equation}
(\nabla_{X}\omega )(Y) := X( \omega(Y)) - \omega(\nabla_X Y) \;\;\; ,
\end{equation}
for any $X, Y \in TG$. By writing $\omega = (\omega_L)_a \, l^a$ we find
\begin{equation}\label{eq:cov_deriv_1_form_components}
(\nabla_{X}\omega )(Y) = Y_L^a X((\omega_L)_a) - \frac{1}{2}X^a_L Y^b_L f_{ab}^{\;\; c} (\omega_L)_c \;\;\; .
\end{equation}
The definition of the covariant derivative can be extended to any tensor field in the usual way.

For any function $f$ we define the symmetrised covariant derivative $\nabla f \in \Omega^1(G)$ as $\nabla f := df$, and for any $\omega\in \Omega^1(G)$ we define it as
\begin{equation}\label{eq:def_symm_cov_deriv}
(\nabla \omega )(X,Y) := (\nabla_{X}\omega )(Y) + (\nabla_{Y}\omega )(X) \;\;\; ,
\end{equation}
for any $X, Y \in TG$. Note that the symmetrised covariant derivative has the same symbol as the usual covariant derivative. Any ambiguity between the two can be resolved by the fact that the symmetrised covariant derivative does not take a subscript argument, while the usual covariant derivative does.

Using~\eqref{eq:cov_deriv_1_form_components}, and the anti-symmetry of $f_{ab}^{\;\; c}$, we find
\begin{equation}\label{eq:sym_cov_deriv_omega_components}
(\nabla \omega )(X,Y) = X((\omega_L)_a)Y_L^a + Y((\omega_L)_a)X_L^a \;\;\; .
\end{equation}
We also have that
\begin{equation}
\left( d((\omega_L)_a) \vee l^a \right) (X,Y) = X((\omega_L)_a)Y_L^a + Y((\omega_L)_a)X_L^a \;\;\; ,
\end{equation}
where $\vee$ denotes the symmetrised tensor product, i.e. $\alpha \vee \beta := \alpha \otimes \beta + \beta \otimes \alpha$. We can therefore write
\begin{equation}\label{eq:symm_cov_deriv_property}
\nabla \omega = d((\omega_L)_a) \vee l^a = d((\omega_R)_a) \vee r^a \;\;\; ,
\end{equation}
where the last equality follows from the fact that there is no distinction between the properties of left and right fields. In other words, we could re-do the above derivation entirely with right-invariant fields, and we would arrive at the same result. The fact that the symmetrised covariant derivative has this simple form in terms of the left/right-invariant basis will be extremely convenient for our purposes.

\subsection{Convolution}\label{convgroup}

\subsubsection{Definition for Functions}

Consider the group $\mathbb{R}$ with addition. For any $x,x' \in \mathbb{R}$ the group action is defined as $x' \cdot x := x+x'$. The usual convolution of two functions on $\mathbb{R}$ is
\begin{equation}
(f \ast \tilde{f} )(x') = \int_{-\infty}^{\infty} dx f(x) \tilde{f}(x'-x) = \int_{\mathbb{R}} dx f(x) \tilde{f}(x^{-1}\cdot x') \;\;\; ,
\end{equation}
where on the far right we have written it in a form that looks more generalisable to any Lie group, provided the integration measure can be defined. In the above case the group is abelian, so we can equivalently write the argument of $\tilde{f}$ as $x' \cdot x^{-1}$ instead of $x^{-1}\cdot x'$.

For a general group, that may be non-abelian, there are then two obvious choices for the convolution of two functions.
\begin{definition}\label{def:function_conv_group}
For a compact Lie group $G$, we define the \emph{left} and \emph{right} convolution of two functions $f, \tilde{f} \in \Omega^0(G)$ as
\begin{align}\label{eq:function_conv_group}
(f \ast_L \tilde{f})(g') & := \int_G dg \, f(g) \tilde{f}(g^{-1}g')  \; ,
\\
(f \ast_R \tilde{f})(g') & := \int_G dg \, f(g) \tilde{f}(g' g^{-1})  \; ,
\end{align}
where $dg$ is the Haar measure, normalised such that $\mathrm{vol}(G)=1$.
\end{definition}
\noindent For compact groups the Haar measure is invariant under right and left action, and inversion. Note that
\begin{align}
(f \ast_L \tilde{f})(g') & = \int_G dg \, f(g) \tilde{f}(g^{-1}g') \nonumber
\\
& = \int_G d\tilde{g} \, f(g' \tilde{g}^{-1}) \tilde{f}(\tilde{g}) \nonumber
\\
& = (\tilde{f} \ast_R f )(g') \;\;\; ,
\end{align}
where, from  line 1 to line 2, we have changed integration variables from $g$ to $\tilde{g}=g^{-1}g'$, and used the fact that the integration measure is invariant.

We also note that in \cite{Farashahi:2013con} they are able to define the above convolution for more than just smooth functions.

\subsubsection{Definition for Tensor Fields}

Generalising from functions to tensor fields, we \textit{want} to write the right and left convolution as
\begin{align}
(S \otimes_L T)\big|_{g'} & =\int_G dg \, S\big|_{g} \otimes T\big|_{g^{-1} g'}
\\
(S \otimes_R T)\big|_{g'} & =\int_G dg \, S\big|_{g} \otimes T\big|_{ g'g^{-1}}
\end{align}
where $S$ is a $(0,p)$-tensor field, $T$ is a $(0,q)$-tensor field, and the convolution should be a $(0,p+q)$-tensor field. Of course this does not work, as we cannot take tensor products of $S$ and $T$ at different points in the integrand. Since we want the output to be a $(0,p+q)$-tensor at point $g'$, we need some way of `shifting' $S$ from $g$ to $g'$, and $T$ from $g^{-1}g'$, or $g'g^{-1}$, to $g'$. The right and left pull-back, and the maps $\mathfrak{l}_g$ and $\mathfrak{r}_g$, enable us to do this in a way that is consistent with the convolution on functions.

First, we note that we can re-write the right and left convolutions of two functions, $f, \tilde{f}\in \Omega^0(G)$, as
\begin{equation}
\begin{split}
f \ast_L \tilde{f} & = \int_G dg \, \mathfrak{r}_g (f) L_{g^{-1}}^* \tilde{f} 
\\
f \ast_R \tilde{f} & = \int_G dg \, \mathfrak{l}_g (f) R_{g^{-1}}^* \tilde{f} \;\;\; .
\end{split}
\end{equation}
One can verify this by evaluating the above at any point $g'\in G$. Since the left convolution, $\ast_L$, involves the map $\mathfrak{r}_g$ on the first argument, and $L_{g^{-1}}^*$ on the second, we will henceforth refer to is as the \emph{right-left} convolution, or the $RL$-convolution, and denote it by $\ast_{RL}$. Likewise, the right convolution will be referred to as the \emph{left-right}, or $LR$-convolution, and will be written as $\ast_{LR}$. One can also define the $LL$- and $RR$-convolutions, using the respective maps, but since $\mathfrak{r}_g (f) = \mathfrak{l}_g (f)$ for any function $f \in \Omega^0 (G)$, we have $\ast_{LL} = \ast_{RL}$ and $\ast_{RR} = \ast_{LR}$. Notably, this will not be the case for tensor fields.

Given that $\mathfrak{r}_g$, $\mathfrak{l}_g$, and the pull-backs are defined for any $(0,p)$-tensor field, we immediately have our desired definitions for tensor fields:
\begin{definition}\label{def:tensor_convolution_definition_G}
For any $S \in \mathfrak{T}^0_p G$ and any $T \in \mathfrak{T}^0_q G$, the $RL$- and $LR$-convolutions, $\otimes_{RL} , \otimes_{LR}   \, : \, \mathfrak{T}^0_p G \times \mathfrak{T}^0_q G \rightarrow \mathfrak{T}^0_{p+q} G$, are defined as
\begin{align}
S \otimes_{RL} T & := \int_G dg \, \mathfrak{r}_g (S) \otimes L_{g^{-1}}^* T \;\;\; ,
\\
S \otimes_{LR} T & := \int_G dg \, \mathfrak{l}_g (S)\otimes R_{g^{-1}}^* T \;\;\; .
\end{align}
Note, one can define the convolution  on $\mathfrak{T}^p_0 G$ tensors using the left/right push-forwards and the corresponding $\mathfrak{l,r}$ maps for vector fields \eqref{lrvect}. 
\end{definition}
\noindent In general, it is no longer true that $\otimes_{LL} = \otimes_{RL}$ and $\otimes_{RR} = \otimes_{LR}$, as it was for the convolution between functions. Indeed, these identities are only valid for functions and for abelian groups where $l_a = r_a$. Our choice of the `mixed' convolutions, $\otimes_{RL}$ and $\otimes_{LR}$, will become more obvious when we discuss derivatives below.

In what follows we will sometimes write $\wedge_{AB}$ and $\vee_{AB}$ instead of $\otimes_{AB}$, where $A,B=R,L$. In this case one should substitute $\wedge$ and $\vee$ respectively for $\otimes$ in \textbf{Def}.~\ref{def:tensor_convolution_definition_G}.

Using the basis fields $r^a$ and $l^a$ we can decompose two 1-forms $\omega , \tilde{\omega} \in \Omega^1(G)$ as
\begin{align}
\omega & = (\omega_R)_a \, r^a = (\omega_L)_a \, l^a \nonumber
\\
\tilde{\omega} & = (\tilde{\omega}_R)_a \, r^a = (\tilde{\omega}_L)_a \, l^a \;\;\; .
\end{align}
Using~\eqref{eq:left_action_1_form_components},~\eqref{eq:right_action_1_form_components}, and~\eqref{eq:RLg_action_1_form_components}, the $RL$-convolution  between $\omega$ and $\tilde{\omega}$ then simplifies to
\begin{align}
\omega \otimes_{RL} \tilde{\omega} & = \int_G dg \, \mathfrak{r}_g(\omega) \otimes L_{g^{-1}}^* \, \tilde{\omega} \nonumber
\\
& = \int_G dg \, \mathfrak{r}_g((\omega_R)_a) \, r^a \otimes ( L_{g^{-1}}^* \, (\tilde{\omega}_L)_b ) \, l^b \nonumber
\\
& = \left( \int_G dg \, \mathfrak{r}_g((\omega_R)_a) \, ( L_{g^{-1}}^* \, (\tilde{\omega}_L)_b ) \right) \,  r^a \otimes l^b \nonumber
\\
& = \Big( (\omega_R)_a \ast_{RL}  (\tilde{\omega}_L)_b \Big) r^a \otimes l^b \;\;\; ,
\end{align}
and similarly,
\begin{equation}\label{eq:LR_conv_for_1_forms_using_LR_bases}
\omega \otimes_{LR} \tilde{\omega} = \Big( (\omega_L)_a \ast_{LR}  (\tilde{\omega}_R)_b \Big) l^a \otimes r^b \;\;\; .
\end{equation}
From this it is clear that the tensor convolutions we have defined amount to convolutions of the component functions in the right- and left-invariant bases. This generalises to convolutions between any $(0,p)$-tensor fields.

We briefly note the following properties of the $AB$-convolutions:

\noindent\textit{Commutativity} - Writing the convolution in terms of the right- and left-invariant bases it is then clear that
\begin{align}
\omega \vee_{AB} f & = f \vee_{BA} \omega \nonumber
\\
\omega \vee_{AB} \tilde{\omega} & = \tilde{\omega} \vee_{BA} \omega \;\;\; ,
\end{align}
for any $f \in \Omega^0(G)$ and any $\omega , \tilde{\omega}\in \Omega^1(G)$. For any $\alpha \in \Omega^p(G)$, and any $\beta \in \Omega^q(G)$, it is also clear that
\begin{equation}
\alpha \wedge_{AB} \beta = (-1)^{pq} \beta \wedge_{BA} \alpha \;\;\; .
\end{equation}

\noindent\textit{Associativity} - In appendix~\ref{sec:app_associativity} we prove the following associativity property:
\begin{equation}
S \otimes_{AB} ( T \otimes_{AB} U ) = ( S \otimes_{AB} T )\otimes_{AB} U \;\;\;  ,
\end{equation}
for any $S \in \mathfrak{T}^0_p G$, $T \in \mathfrak{T}^0_q G$, and $U \in \mathfrak{T}^0_r G$.

Of particular importance for our purposes is how the convolution acts under derivatives, which we will now discuss.

\subsection{Differentiation}\label{diffgroup}

\subsubsection{Exterior Differentiation}

In this section we will be concerned with the wedge convolution on differential forms and its action under the exterior derivative. First, we show the following
\begin{prop}\label{prop:exterior_deriv_property_for_conv_on_funcs}
For any pair of functions, $f , \tilde{f} \in \Omega^0(G)$, their $AB$-convolution (for $AB=RL,LR$) satisfies
\begin{equation}
d(f \ast_{AB} \tilde{f} ) = df \otimes_{AB} \tilde{f} = f \otimes_{AB} d\tilde{f}.
\end{equation}
\end{prop}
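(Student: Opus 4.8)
The plan is to reduce the statement about tensor convolutions of functions to the already-established fact, equation~\eqref{eq:d_on_functions}, that $df = r_a(f)\, r^a = l_a(f)\, l^a$ in the left/right-invariant bases, together with \textbf{Def.}~\ref{def:tensor_convolution_definition_G} and the observation (made just above) that $AB$-convolutions of tensor fields amount to ordinary convolutions of the component functions in the appropriate invariant bases. Concretely, I would first expand $df$ in the right-invariant basis as $df = r_a(f)\, r^a$, so that $df$ is a $1$-form with component functions $(df)_R{}_a = r_a(f)$. Then by the component formula analogous to the one displayed for $\omega \otimes_{RL}\tilde\omega$, the convolution $df \otimes_{RL}\tilde f$ equals $\big( r_a(f) \ast_{RL} \tilde f \big)\, r^a$, since $\tilde f$ is a scalar and contributes no basis form. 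So it suffices to show $r_a(f)\ast_{RL}\tilde f = r_a(f \ast_{RL}\tilde f)$ as functions, i.e. that the right-invariant vector field $r_a$ passes through the $RL$-convolution onto its first argument.

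The key computation is therefore to differentiate under the integral sign in \eqref{eq:function_conv_group} and track how $r_a$, acting at the evaluation point $g'$, relates to differentiation of the integrand. Writing $(f\ast_{RL}\tilde f)(g') = \int_G dg\, f(g)\,\tilde f(g^{-1}g')$ — recalling $\ast_L = \ast_{RL}$ on functions — I would apply $r_a$ at $g'$. Since $r_a$ is right-invariant and the $g'$-dependence sits only in the factor $\tilde f(g^{-1}g') = (L_{g^{-1}}^*\tilde f)(g')$, I get $r_a(f\ast_{RL}\tilde f)(g') = \int_G dg\, f(g)\,\big(r_a(L_{g^{-1}}^*\tilde f)\big)(g')$. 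The crucial identity is that $r_a$ commutes with the pull-back $L_{g^{-1}}^*$, which follows because $r_a$ is \emph{right}-invariant while $L_{g^{-1}}$ is a \emph{left} translation, and left and right translations commute: $R_{g^{-1}}^* L_h^* = L_h^* R_{g^{-1}}^*$, hence the generator $r_a$ of right translations satisfies $r_a(L_h^*\phi) = L_h^*(r_a\phi)$ for any function $\phi$. Thus $r_a(f\ast_{RL}\tilde f) = \int_G dg\, f(g)\, L_{g^{-1}}^*(r_a\tilde f)$. On the other side, $df\otimes_{RL}\tilde f$ has first component $r_a(f)$, and $r_a(f)\ast_{RL}\tilde f = \int_G dg\, \big(r_a f\big)(g)\,\tilde f(g^{-1}g')$; an integration-by-parts / unimodularity argument (the Haar measure on a compact group is bi-invariant and inversion-invariant, and $r_a$ is divergence-free with respect to it) moves the $r_a$ off $f$ and onto $\tilde f(g^{-1}g')$, where, using $r_a$ at $g$ acting on $g^{-1}$, one again lands on $L_{g^{-1}}^*(r_a\tilde f)$ — matching. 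The same chain run with $l_a$ and $R_{g^{-1}}^*$ handles $f\otimes_{RL} d\tilde f$, using $d\tilde f = l_a(\tilde f)\, l^a$; and the entire argument repeats verbatim with left/right interchanged for the $LR$ case. I would streamline all three equalities by phrasing everything in terms of the component-function convolutions $\ast_{RL}$, $\ast_{LR}$ and the elementary commutation relations between invariant vector fields and oppositely-handed pull-backs.

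The main obstacle, and the step I would treat most carefully, is the integration-by-parts move: justifying that $\int_G dg\,(r_a f)(g)\, \tilde f(g^{-1}g') = \int_G dg\, f(g)\, \big(r_a(\text{w.r.t. }g)\big)\big[\tilde f(g^{-1}g')\big]$ with the correct sign and no boundary term, and then identifying the inner derivative correctly. Because $G$ is compact there is no boundary, and bi-invariance of the Haar measure gives that each $r_a$ (and each $l_a$) is Killing, hence divergence-free, so $\int_G dg\, r_a(\phi\psi) = 0$ and the Leibniz rule yields the integration by parts cleanly. The only genuine subtlety is the chain rule through the inversion-and-multiplication map $g\mapsto g^{-1}g'$: differentiating $\tilde f(g^{-1}g')$ in $g$ along $r_a$ at $g$ must be shown to produce exactly $-\,(l_a\tilde f)$-type terms at $g^{-1}g'$, which, after recombining with the change of variables $\tilde g = g^{-1}g'$ that was already used above to rewrite the convolution, reassembles into $L_{g^{-1}}^*(r_a\tilde f)$ evaluated at $g'$. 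I would verify this differential of inversion on $G$ (the standard fact $(d\,\mathrm{inv})_g = -{L_{g^{-1}}}_*{R_{g^{-1}}}_*$) as the one genuinely computational lemma, and then the Proposition follows by assembling these pieces.
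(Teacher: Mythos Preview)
Your overall strategy---differentiate under the Haar integral, exploit invariance, and reassemble using the component formula for the tensor convolution---is the right one, and is essentially what the paper does. But there is a genuine error in the key step: you claim that $r_a$ commutes with $L_h^*$ because ``$r_a$ is the generator of right translations''. This is backwards. The flow of the \emph{right}-invariant field $r_a$ is \emph{left} translation, $g\mapsto \exp(\lambda e_a)\,g$ (check: $\tfrac{d}{d\lambda}\big|_0 \exp(\lambda e_a)g = {R_g}_* e_a = (r_a)_g$), and the flow of the \emph{left}-invariant field $l_a$ is \emph{right} translation, $g\mapsto g\,\exp(\lambda e_a)$. Consequently it is $l_a$ that commutes with $L_h^*$, and $r_a$ that commutes with $R_h^*$---not the pairing you asserted. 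A direct check confirms $r_a(L_h^*\phi)\neq L_h^*(r_a\phi)$ for non-abelian $G$.

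Once this is corrected, the argument runs cleanly and the integration-by-parts detour becomes unnecessary. Applying $l_a$ at $g'$ to $(f\ast_{RL}\tilde f)(g')=\int_G dg\, f(g)\,(L_{g^{-1}}^*\tilde f)(g')$ and using $l_a\circ L_{g^{-1}}^* = L_{g^{-1}}^*\circ l_a$ gives $l_a(f\ast_{RL}\tilde f)=f\ast_{RL} l_a(\tilde f)$, hence $d(f\ast_{RL}\tilde f)=f\otimes_{RL} d\tilde f$. For the other equality the paper does \emph{not} integrate by parts: it uses the commutativity $f\ast_{RL}\tilde f=\tilde f\ast_{LR} f$ already established, applies $r_a$ (which commutes with $R_{g^{-1}}^*$) to the $LR$ form, and swaps back. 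Your proposed IBP step is valid in principle (each $r_a$ is divergence-free for the Haar measure on compact $G$), but the chain-rule computation through $g\mapsto g^{-1}g'$ does not produce $L_{g^{-1}}^*(r_a\tilde f)$ as you claim; the tangent of $\lambda\mapsto g^{-1}\exp(-\lambda e_a)g'$ at $\lambda=0$ is $-{L_{g^{-1}}}_*{R_{g'}}_* e_a$, which is neither $(r_a)_{g^{-1}g'}$ nor $(l_a)_{g^{-1}g'}$ in general. So that branch of your argument would not close. Fix the $l_a$/$r_a$ roles and use commutativity of $\ast_{RL}$ with $\ast_{LR}$ instead, and your proof becomes the paper's.
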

\begin{proof}
To see this consider the $RL$-convolution between two functions, $f \ast_{RL} \tilde{f}$. Using~\eqref{eq:d_on_functions} the exterior derivative of the resulting function can be written as
\begin{equation}
d( f\ast_{RL} \tilde{f} ) = l_a ( f\ast_{RL} \tilde{f} ) \, l^a \;\;\; .
\end{equation}
Consider any left-invariant basis vector $l_a$. For any $g'\in G$ we then have
\begin{align}
l_a ( f\ast_{RL} \tilde{f} )\big|_{g'} & = l_a  \left( \int_G dg \, f(g) \tilde{f}(g^{-1}g') \right)\Big|_{g'} \nonumber
\\
& = \frac{d}{d\lambda}\left( \int_G dg \, f(g) \tilde{f}(g^{-1}\gamma(\lambda)) \right)\Big|_{\lambda=0} \;\;\; ,
\end{align}
where the curve $\gamma(\lambda )$ is a representative of $l_a$ at $g'$, i.e. $\gamma(\lambda )=g' \exp (\lambda e_a )$ where $\exp \, : \,  \mathfrak{g} \rightarrow G$ is the exponential map. Continuing on we have
\begin{align}
l_a ( f\ast_{RL} \tilde{f} )\big|_{g'} & = \frac{d}{d\lambda}\left( \int_G dg \, f(g) \tilde{f}(g^{-1}\gamma(\lambda)) \right)\Big|_{\lambda=0} \nonumber
\\
& =  \int_G dg \, f(g) \frac{d}{d\lambda}\left( \tilde{f}(g^{-1}\gamma(\lambda)) \right)\Big|_{\lambda=0} \nonumber
\\
& =  \int_G dg \, f(g) l_a\left( L_{g^{-1}}^*\tilde{f} \right)\big|_{g'} \nonumber
\\
& =  \int_G dg \, f(g) ({L_{g^{-1}}}_* l_a)_{g'} (\tilde{f}) \nonumber
\\
& = \int_G dg \, f(g) (l_a)_{g^{-1}g'} (\tilde{f}) \;\;\; ,
\end{align}
where the last line follows from the left-invariance of $l_a$. We have just shown that
\begin{equation}
l_a ( f\ast_{RL} \tilde{f} ) = f \ast_{RL} l_a(\tilde{f}) \;\;\; .
\end{equation}
Similarly, one can show that 
\begin{equation}
r_a ( f\ast_{LR} \tilde{f} ) = f \ast_{LR} r_a(\tilde{f}) \;\;\; .
\end{equation}
Together with the commutativity of $\ast_{AB}$ we have
\begin{equation}
l_a ( f\ast_{LR} \tilde{f} ) = l_a ( \tilde{f}\ast_{RL} f ) = \tilde{f}\ast_{RL} l_a(f) = l_a(f) \ast_{LR} \tilde{f} \;\;\; , 
\end{equation}
and similarly
\begin{equation}
r_a ( f\ast_{RL} \tilde{f} ) = r_a(f) \ast_{RL} \tilde{f} \;\;\; .
\end{equation}
We can now show that
\begin{align}
f \otimes_{RL} d\tilde{f} & = f \otimes_{RL} (l_a(\tilde{f}) l^a) \nonumber
\\
& = \left( f \ast_{RL} l_a(\tilde{f}) \right)\, l^a \nonumber
\\
& = l_a\left( f \ast_{RL} \tilde{f} \right)\, l^a \nonumber
\\
& = d(f \ast_{RL} \tilde{f} ) \;\;\; ,
\end{align}
and that
\begin{align}
df \otimes_{RL} \tilde{f} & = (r_a(f) r^a ) \otimes_{RL} \tilde{f} \nonumber
\\
& = \left( r_a(f) \ast_{RL} \tilde{f} \right)\, r^a \nonumber
\\
& = r_a \left( f \ast_{RL} \tilde{f} \right) \, r^a \nonumber
\\
&= d( f \ast_{RL} \tilde{f} ) \;\;\; .
\end{align}
Similarly, one can show that
\begin{equation}
d(f \ast_{LR} \tilde{f} ) = df \otimes_{LR} \tilde{f} = f \otimes_{LR} d\tilde{f} \;\;\; .
\end{equation}
\end{proof}

This property under the exterior derivative is why the mixed convolutions, $\otimes_{RL}$ and $\otimes_{LR}$, are more useful for our purposes. Unfortunately, this sort of derivative property does not continue to higher degree forms. For example, for $\omega \in \Omega^1(G)$ we have
\begin{align}
d ( f \wedge_{RL} \omega ) & = d ( f \otimes_{RL} \omega ) \nonumber
\\
& = d\Big(  (f \ast_{RL} (\omega_L)_a ) \, l^a  \Big) \nonumber
\\
& = d \Big(f \ast_{RL} (\omega_L)_a \Big) \wedge l^a + \Big(f \ast_{RL} (\omega_L)_a \Big) dl^a \;\;\; ,
\end{align}
whereas
\begin{align}
df \wedge_{RL} \omega & = \Big( r_a(f)r^a \Big) \wedge_{RL} \left( (\omega_L)_b l^b \right) \nonumber
\\
& = \Big( r_a(f) \ast_{RL} (\omega_L)_b \Big) \, r^a \wedge l^b \nonumber
\\
& = r_a \Big( f \ast_{RL} (\omega_L)_b \Big)  \, r^a \wedge l^b \nonumber
\\
& = d \Big( f \ast_{RL} (\omega_L)_b \Big) \wedge l^a \;\;\; ,
\end{align}
and so
\begin{equation}
d ( f \wedge_{RL} \omega ) = df \wedge_{RL} \omega + \big(f \ast_{RL} (\omega_L)_a \big) dl^a \;\;\; .
\end{equation}
Similarly, one can show that 
\begin{equation}
d ( \omega \wedge_{RL} f ) = - \omega \wedge_{RL} df + ( (\omega_R)_a \ast_{RL} f  ) dr^a \;\;\; .
\end{equation}
The extra term on the far right in both expressions spoils the distributivity of $d$ that we want for the convolution, and is the reason we cannot use our convolution, as defined, to recover the linearised BRST symmetries of the 2-form field. On the other hand, the symmetrised covariant derivative, $\nabla$, \emph{does} satisfy the desired distributivity property with respect to our convolution.

\subsubsection{Covariant Differentiation}

Here we show that the symmetrised covariant derivative, defined in~\eqref{eq:def_symm_cov_deriv} and also denoted by $\nabla$ (but with no subscript argument), satisfies the following
\begin{prop}\label{prop:cov_deriv_property_on_G}
For any $f, \tilde{f}\in \Omega^0(G)$ and any $\omega\in \Omega^1(G)$,
\begin{equation}\label{eq:cov_deriv_property_on_G}
\nabla ( f \vee_{AB} \tilde{f} ) = \nabla f \vee_{AB} \tilde{f} = f \vee_{AB} \nabla \tilde{f} \;\;\; , \;\;\; \nabla\left( f  \vee_{AB} \omega \right) = \nabla f  \vee_{AB} \omega \;\;\; ,
\end{equation}
where $AB = RL, LR$.
\end{prop}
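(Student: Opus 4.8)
The plan is to deduce the Proposition from Proposition~\ref{prop:exterior_deriv_property_for_conv_on_funcs}, the basis formula $\nabla\eta=d((\eta_L)_a)\vee l^a=d((\eta_R)_a)\vee r^a$ of~\eqref{eq:symm_cov_deriv_property}, and the component identities $r_a(f\ast_{RL}\tilde f)=r_a(f)\ast_{RL}\tilde f$ and $l_a(f\ast_{LR}\tilde f)=l_a(f)\ast_{LR}\tilde f$ established in the course of proving Proposition~\ref{prop:exterior_deriv_property_for_conv_on_funcs}. Two preliminary observations do most of the work: since one factor is a $0$-form, $f\vee_{AB}\tilde f$ is just the scalar convolution $f\ast_{AB}\tilde f$ and $f\vee_{AB}\omega$ coincides with $f\otimes_{AB}\omega$; moreover, inserting $\mathfrak{r}_g(f)=\mathfrak{l}_g(f)=f(g)$ into \textbf{Def}.~\ref{def:tensor_convolution_definition_G} and using~\eqref{eq:left_action_1_form_components}--\eqref{eq:right_action_1_form_components} gives the explicit component forms $f\otimes_{RL}\omega=(f\ast_{RL}(\omega_L)_a)\,l^a$ and $f\otimes_{LR}\omega=(f\ast_{LR}(\omega_R)_a)\,r^a$, just as in the $1$-form computations above.

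First I would dispatch the chain $\nabla(f\vee_{AB}\tilde f)=\nabla f\vee_{AB}\tilde f=f\vee_{AB}\nabla\tilde f$. Here $f\vee_{AB}\tilde f=f\ast_{AB}\tilde f$ is a function, so $\nabla$ acts on it as $d$, and Proposition~\ref{prop:exterior_deriv_property_for_conv_on_funcs} gives $d(f\ast_{AB}\tilde f)=df\otimes_{AB}\tilde f=f\otimes_{AB}d\tilde f$. Since $\nabla f=df$ for a function, and $\vee_{AB}$ agrees with $\otimes_{AB}$ whenever a $0$-form occupies one slot, this is precisely the asserted chain of equalities, with no further work required.

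Next, for $\nabla(f\vee_{RL}\omega)=\nabla f\vee_{RL}\omega$ I would expand $\omega=(\omega_L)_a\,l^a$ and set $h_a:=f\ast_{RL}(\omega_L)_a$, so that $f\vee_{RL}\omega=h_a\,l^a$; applying the left-invariant form of~\eqref{eq:symm_cov_deriv_property} yields $\nabla(f\vee_{RL}\omega)=d(h_a)\vee l^a$. On the other side, $\nabla f=df=r_b(f)\,r^b$, and the component rule for the $RL$-convolution of two $1$-forms (right-invariant components in the first slot, left-invariant in the second) gives $\nabla f\vee_{RL}\omega=(r_b(f)\ast_{RL}(\omega_L)_a)\,r^b\vee l^a$. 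Now the component identity gives $r_b(f)\ast_{RL}(\omega_L)_a=r_b(f\ast_{RL}(\omega_L)_a)=r_b(h_a)$, and $d(h_a)=r_b(h_a)\,r^b$ by~\eqref{eq:d_on_functions}, so the two expressions coincide. The $LR$ case is the exact mirror image: expand $\omega=(\omega_R)_a\,r^a$, set $k_a:=f\ast_{LR}(\omega_R)_a$ so $f\vee_{LR}\omega=k_a\,r^a$, apply the right-invariant form of~\eqref{eq:symm_cov_deriv_property}, use $\nabla f=df=l_b(f)\,l^b$ together with the component rule~\eqref{eq:LR_conv_for_1_forms_using_LR_bases}, and invoke $l_b(f)\ast_{LR}(\omega_R)_a=l_b(k_a)$.

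The argument is essentially bookkeeping once Proposition~\ref{prop:exterior_deriv_property_for_conv_on_funcs} is in hand, so I do not anticipate a real obstacle; the one place that demands care is keeping track of which invariant frame ($l^a$ versus $r^a$) occupies which slot of $\otimes_{AB}$ and matching it to the frame in which one writes $\nabla f=df$ and applies~\eqref{eq:symm_cov_deriv_property} --- in the $RL$ case the natural right-invariant expansion $df=r_b(f)\,r^b$ feeds cleanly into the ``$R$'' leg of $\vee_{RL}$, while the output $h_a\,l^a$ calls for the left-invariant version of~\eqref{eq:symm_cov_deriv_property}, and one must verify these conventions are consistent. The only structural input beyond Proposition~\ref{prop:exterior_deriv_property_for_conv_on_funcs} is the commutation of the invariant vector fields with the component-function convolutions, which is already recorded there. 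Finally, note that the stronger rule $\nabla(f\vee_{AB}\omega)=f\vee_{AB}\nabla\omega$ is not claimed --- in line with the discussion around~\eqref{eq:goal_derivative_rule} --- and the above argument does not produce it.
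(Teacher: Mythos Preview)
Your proposal is correct and follows essentially the same route as the paper: the function--function chain is reduced to Proposition~\ref{prop:exterior_deriv_property_for_conv_on_funcs} via $\nabla=d$ on $\Omega^0(G)$, and the function--1-form identity is obtained by expanding $f\vee_{RL}\omega=(f\ast_{RL}(\omega_L)_a)\,l^a$, applying~\eqref{eq:symm_cov_deriv_property}, and matching against $\nabla f\vee_{RL}\omega$ via the component identity $r_a(f\ast_{RL}\tilde f)=r_a(f)\ast_{RL}\tilde f$. Your closing remark slightly undersells what is available: the rule $\nabla(f\vee_{AB}\omega)=f\vee_{AB}\nabla\omega$ is in fact proved separately (as the subsequent proposition), though you are right that it is not part of the present statement.
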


\begin{proof}
Recall that on functions the symmetric covariant derivative acts as $\nabla f = df$. From \textbf{Prop}.~\ref{prop:exterior_deriv_property_for_conv_on_funcs} we immediately have
\begin{equation}
\nabla ( f \vee_{AB} \tilde{f} ) = \nabla ( f \ast_{AB} \tilde{f} ) = \nabla f \vee_{AB} \tilde{f} = f \vee_{AB} \nabla \tilde{f} \;\;\; ,
\end{equation}
for any two functions $f, \tilde{f} \in\Omega^0(G)$. For any 1-form $\omega\in\Omega^1(G)$ we have
\begin{align}
\nabla f \vee_{RL} \omega & = df \vee_{RL} \omega \nonumber
\\
& = r_a(f)r^a \vee_{RL} (\omega_L)_b l^b \nonumber
\\
& = \Big( r_a(f) \ast_{RL} (\omega_L)_b \Big) \, r^a \vee l^b \nonumber
\\
& = r_a \big( f \ast_{RL} (\omega_L)_b \big) \, r^a \vee l^b \nonumber
\\
& = d\big( f \ast_{RL} (\omega_L)_b \big)\vee l^b \;\;\; .
\end{align}
Recalling~\eqref{eq:symm_cov_deriv_property} we have
\begin{align}
\nabla\left( f \vee_{RL} \omega \right) & = \nabla\big(  f \vee_{RL} ((\omega_L)_a l^a ) \big) \nonumber
\\
& = \nabla\Big( \big( f \ast_{RL} (\omega_L)_a \big) \; l^a \Big) \nonumber
\\
& = d\big( f \ast_{RL} (\omega_L)_a \big) \vee l^a \;\;\; ,
\end{align}
as $f \ast_{RL} (\omega_L)_a $ are the $l^a$ components of $f \vee_{RL} \omega$. Comparing the previous two equations we have our desired derivative property:
\begin{equation}
\nabla\left( f \vee_{RL} \omega \right) = \nabla f \vee_{RL} \omega \;\;\; .
\end{equation}
With a similar calculation one can verify the analogous property for $\vee_{LR}$.
\end{proof}

Not required in~\eqref{eq:goal_derivative_rule}, but true nonetheless, is the following
\begin{prop}\label{prop:cov_deriv_property_2_on_G}
For any $f\in \Omega^0(G)$ and any $\omega\in \Omega^1(G)$,
\begin{equation}\label{eq:cov_deriv_property_2_on_G}
\nabla\left( f  \vee_{AB} \omega \right) =  f  \vee_{AB} \nabla \omega \;\;\; ,
\end{equation}
where $AB = RL, LR$.
\end{prop}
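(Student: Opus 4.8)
The plan is to mimic the structure of the proof of \textbf{Prop}.~\ref{prop:cov_deriv_property_on_G}, exploiting the key structural fact~\eqref{eq:symm_cov_deriv_property}, namely that for any 1-form $\omega \in \Omega^1(G)$ one has $\nabla\omega = d((\omega_L)_a)\vee l^a$ in the left-invariant basis (and likewise with right-invariant quantities). This reduces the statement about the symmetrised covariant derivative to statements about the exterior derivative acting on component functions, where \textbf{Prop}.~\ref{prop:exterior_deriv_property_for_conv_on_funcs} applies directly. I would work with the $RL$-convolution and note that the $LR$ case follows by the identical argument with left and right interchanged.

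First I would compute the left-hand side. Writing $\omega = (\omega_L)_a\, l^a$, by the component form of the convolution derived above we have $f\vee_{RL}\omega = \big(f\ast_{RL}(\omega_L)_a\big)\, r^? $ — here one must be careful: $f\vee_{RL}\omega = \big(f\ast_{RL}(\omega_L)_a\big)\, l^a$ since convolving a function on the left simply acts on the component functions and leaves the $l^a$ basis untouched (the $\mathfrak r_g$ map on a function produces a constant, and $L_{g^{-1}}^*$ acts componentwise on $l^a$ as in~\eqref{eq:left_action_1_form_components}). Then applying~\eqref{eq:symm_cov_deriv_property} to this 1-form gives $\nabla(f\vee_{RL}\omega) = d\big(f\ast_{RL}(\omega_L)_a\big)\vee l^a$. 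Now invoke \textbf{Prop}.~\ref{prop:exterior_deriv_property_for_conv_on_funcs} to move the exterior derivative onto the second argument: $d\big(f\ast_{RL}(\omega_L)_a\big) = f\ast_{RL} d\big((\omega_L)_a\big)$, but since $d$ of a scalar is not what appears in $\nabla\omega$, I instead use $l_b\big(f\ast_{RL}(\omega_L)_a\big) = f\ast_{RL} l_b\big((\omega_L)_a\big)$ from the proof of \textbf{Prop}.~\ref{prop:exterior_deriv_property_for_conv_on_funcs}, so that $d\big(f\ast_{RL}(\omega_L)_a\big) = \big(f\ast_{RL} l_b((\omega_L)_a)\big)\, l^b$.

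Next I would compute the right-hand side, $f\vee_{RL}\nabla\omega$. Using~\eqref{eq:symm_cov_deriv_property} again, $\nabla\omega = d((\omega_L)_a)\vee l^a = l_b((\omega_L)_a)\, l^b\vee l^a$, which is a symmetric $(0,2)$-tensor with components $l_b((\omega_L)_a)$ (symmetrised) in the $l^b\vee l^a$ basis. Convolving with $f$ on the left then acts componentwise: $f\vee_{RL}\nabla\omega = \big(f\ast_{RL}l_b((\omega_L)_a)\big)\, l^b\vee l^a$. Comparing with the expression obtained for the left-hand side, the two agree, which establishes~\eqref{eq:cov_deriv_property_2_on_G} for $AB=RL$; the $LR$ case is verbatim with $l\leftrightarrow r$.

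The main subtlety — not really an obstacle but the point requiring care — is bookkeeping of which basis ($l^a$ vs $r^a$) the convolution leaves invariant on each side: left-convolution by a function preserves the left-invariant basis structure of whatever it is convolved with, so everything must be consistently expanded in the $l^a$ basis for the $RL$ case, and one should double-check that the symmetrisation in $\vee$ is handled identically on both sides (it is, since it only permutes the basis 1-forms, not the component functions being convolved). A secondary point is that $\nabla\omega$ genuinely is a tensor field to which \textbf{Def}.~\ref{def:tensor_convolution_definition_G} and its componentwise simplification apply — this is exactly why the simple form~\eqref{eq:symm_cov_deriv_property} is, as the text remarks, ``extremely convenient'': it keeps $\nabla\omega$ within the class of tensors whose convolution reduces to convolution of component functions, so no genuinely new computation beyond \textbf{Prop}.~\ref{prop:exterior_deriv_property_for_conv_on_funcs} is needed.
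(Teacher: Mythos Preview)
Your proposal is correct and follows essentially the same route as the paper: both sides are computed by expanding in the left-invariant basis, invoking~\eqref{eq:symm_cov_deriv_property} to write $\nabla(f\vee_{RL}\omega)=d\big(f\ast_{RL}(\omega_L)_a\big)\vee l^a$ and $\nabla\omega=l_b((\omega_L)_a)\,l^b\vee l^a$, and then matching via the identity $l_b\big(f\ast_{RL}(\omega_L)_a\big)=f\ast_{RL}l_b((\omega_L)_a)$ from the proof of \textbf{Prop}.~\ref{prop:exterior_deriv_property_for_conv_on_funcs}. Your remark about bookkeeping --- that $\mathfrak{r}_g$ on a function is just evaluation while $L_{g^{-1}}^*$ preserves the $l^a$ basis, so $f\vee_{RL}\omega$ stays expressed in $l^a$ --- is exactly the point that makes the argument go through.
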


\begin{proof}
Above we saw that
\begin{equation}
\nabla\left( f \vee_{RL} \omega \right) = d\big( f \ast_{RL} (\omega_L)_a \big) \vee l^a \;\;\; ,
\end{equation}
and from~\eqref{eq:symm_cov_deriv_property} we have
\begin{align}
 f \vee_{RL} \nabla \omega  & =  f \vee_{RL} \left( d ((\omega_L)_a ) \vee l^a \right) \nonumber
\\
& = f \vee_{RL} \left( l_b ((\omega_L)_a ) l^b \vee l^a \right) \nonumber
\\
& = \left( f \ast_{RL} l_b ((\omega_L)_a ) \right) l^b \vee l^a \nonumber
\\
& = l_b \left(  f \ast_{RL} (\omega_L)_a  \right) l^b \vee l^a
\nonumber
\\
& = d \left(  f \ast_{RL} (\omega_L)_a  \right) \vee l^a
\;\;\; .
\end{align}
The same can be shown for $\vee_{LR}$.
\end{proof}

\subsection{Analogue of the Convolution Theorem}\label{convthrm}

It is worth pausing to comment on the analogue of the Convolution Theorem, given the above definition of a tensor field convolution on the compact Lie group $G$. The analogue of the Convolution Theorem is well understood for the convolution of two functions on $G$, and so the only generalisation we are providing here is an extension to tensor fields.

Recall that the Convolution Theorem on Euclidean space states that the Fourier coefficients of $f\ast \tilde{f}$ (where $f$ and $\tilde{f}$ are functions) are given by the pointwise product (in Fourier space) of the Fourier coefficients of $f$ and $\tilde{f}$, up to some normalisation factor.

On a general Lie group $G$ the situation is slightly more complicated, even for the convolution of two functions. For any function, $f\in\Omega^0(G)$, the Peter-Weyl Theorem~\cite{peterweyl} tells us that we can decompose it as
\begin{equation}
f(g) = \sum_{[\rho]\in\hat{G}}\sum_{i,j=1}^{d_{\rho}} f(\rho | i,j ) \sqrt{d_{\rho}}\, \phi^{\rho}_{ij}(g) \;\;\; .
\end{equation}
The first sum is over equivalence classes, $[\rho]$, of irreducible unitary representations of $G$, where the equivalence is up to isomorphism. $\hat{G}$ denotes the space of all such equivalence classes. $d_{\rho}$ is the dimension of the representation $\rho$, and $f(\rho | i,j)$ denotes the corresponding `Fourier coefficient' for the representation $\rho$ and indices $i,j=1,..., d_{\rho}$. Lastly, $\phi^{\rho}_{ij}(g)$ denotes the $(i,j)$-th matrix element of the $\rho$ representation of $g\in G$, in some orthonormal basis of the vector space used in the representation, i.e. $\phi^{\rho}_{ij}(g) := \braket{v_i, \rho(g) v_j}$ for some orthonormal basis $\lbrace v_i \rbrace_{i=1,...,d_{\rho}}$. One can then verify that
\begin{align}\label{eq:peter_weyl_identities}
\phi^{\rho}_{ij}(g^{-1}) & = \phi^{\rho}_{ji}(g)^* \nonumber
\\
\phi^{\rho}_{ij}(g h) & = \sum_{k=1}^{d_{\rho}} \phi^{\rho}_{ik}(g)\phi^{\rho}_{kj}(h) \nonumber
\\
\int_G dg \, \phi^{\rho}_{ij}(g)^* \phi^{\rho'}_{i'j'}(g ) & = d_{\rho}^{-1}\delta_{[\rho] [\rho']} \delta_{ii'}\delta_{jj'} \;\;\; ,
\end{align} 
where $\cdot^*$ denotes complex conjugation.

Now consider the convolution of two functions, $f\ast_{AB}\tilde{f}$. The analogue of the Convolution Theorem on Euclidean space is then a relationship between the Fourier coefficients of $f\ast_{AB}\tilde{f}$, i.e. $( f\ast_{AB}\tilde{f} )(\rho | i,j)$, and the Fourier coefficients $f(\rho|i,j)$ and $\tilde{f}(\rho|i,j)$. Using the identities in~\eqref{eq:peter_weyl_identities}, we find that $f\ast_{RL}\tilde{f}$ evaluated at some $g\in G$ simplifies to
\begin{equation}
(f\ast_{RL}\tilde{f})(g) = \sum_{[\rho]\in\hat{G}}\sum_{i,j,k=1}^{d_{\rho}}f(\rho|i,k)\tilde{f}(\rho|k,j)\phi^{\rho}_{ij}(g) \;\;\; ,
\end{equation}
and hence
\begin{equation}
( f\ast_{RL}\tilde{f} )(\rho | i,j) = \frac{1}{\sqrt{d_{\rho}}}\sum_{k=1}^{d_\rho}f(\rho|i,k)\tilde{f}(\rho|k,j) \;\;\; .
\end{equation}
This can be written more succinctly if we let $f(\bm{\rho})$ denote the $d_{\rho}\times d_{\rho}$ `Fourier matrix' with elements $f(\rho|i,j)$. The above equation can then be written as
\begin{equation}\label{eq:fourier_matrix_RL}
(f\ast_{RL}\tilde{f} )(\bm{\rho}) = \frac{1}{\sqrt{d_{\rho}}}f(\bm{\rho}).\tilde{f}(\bm{\rho}) \;\;\; ,
\end{equation}
where `$.$' denotes matrix multiplication. The comparison to the Convolution Theorem is now clear. Specifically, the Fourier matrix of the convolution is given by the pointwise\footnote{Pointwise in the space $\hat{G}$.} matrix product of the Fourier matrices of the input functions. If the group $G$ is abelian, then the representations are all 1-dimensional, and this matrix product reduces to normal multiplication. In this case one recovers the direct analogue of the Convolution Theorem on Euclidean space. Similarly, for the $LR$-convolution one finds
\begin{equation}
(f\ast_{LR}\tilde{f} )(\bm{\rho}) = \frac{1}{\sqrt{d_{\rho}}}\tilde{f}(\bm{\rho}).f(\bm{\rho}) \;\;\; .
\end{equation}

Now that we have established the analogue of the Convolution Theorem for functions on $G$, we turn our attention to the convolution of two tensor fields $S\in \mathfrak{T}^0_p (G)$ and $T\in \mathfrak{T}^0_q (G)$. After expanding $S$ and $T$ in terms of the right and left bases repsectively, we can use~\eqref{eq:LR_conv_for_1_forms_using_LR_bases} to write $S\otimes_{RL}T$ as 
\begin{equation}
S\otimes_{RL}T = \left( (S_R )_{a_1 ... a_p}\ast_{RL}(T_L )_{b_1 ... b_q} \right) r^{a_1}\otimes ... \otimes r^{a_p} \otimes l^{b_1} \otimes ... \otimes l^{b_q} \;\;\; .
\end{equation}
Since the components $(S_R )_{a_1 ... a_p}$ and $(T_L )_{b_1 ... b_q}$ are functions on $G$ (for fixed indices $a_1 ,... ,b_q$), we can decompose them using the Peter-Weyl Theorem and find their corresponding Fourier matrices $(S_R )_{a_1 ... a_p}(\bm{\rho})$ and $(T_L )_{b_1 ... b_q}(\bm{\rho})$. The Fourier matrix associated to the convolution of the components, $ (S_R )_{a_1 ... a_p}\ast_{RL}(T_L )_{b_1 ... b_q} $, is then given by
\begin{equation}\label{eq:S_convolve_RL_T_fourier_matrix}
\left( (S_R )_{a_1 ... a_p}\ast_{RL}(T_L )_{b_1 ... b_q} \right)(\bm{\rho}) = \frac{1}{\sqrt{d_{\rho}}} (S_R )_{a_1 ... a_p}(\bm{\rho}) . (T_L )_{b_1 ... b_q}(\bm{\rho}) \;\;\; ,
\end{equation}
using~\eqref{eq:fourier_matrix_RL} above. This gives us the tensor generalisation of the Convolution Theorem. The first difference to the scalar case is that Fourier matrices, $(S_R )_{a_1 ... a_p}(\bm{\rho})$ and $(T_L )_{b_1 ... b_q}(\bm{\rho})$, are constructed for each component of the tensors $S$ and $T$, when expressed in the relevant right and left bases. Second, the matrix multiplication of $(S_R )_{a_1 ... a_p}(\bm{\rho})$ and $(T_L )_{b_1 ... b_q}(\bm{\rho})$ does not give the Fourier matrix of $S\otimes_{RL}T$ directly. It only gives the the Fourier matrix of the components of $S\otimes_{RL}T$, as expressed in the right and left bases. Specifically, if we expand $S\otimes_{RL}T$ as
\begin{equation}
S\otimes_{RL}T = ( S\otimes_{RL}T )_{a_1 , ... , b_q } r^{a_1}\otimes ... \otimes r^{a_p} \otimes l^{b_1} \otimes ... \otimes l^{b_q} \;\;\; ,
\end{equation}
then the Fourier matrix for the components $( S\otimes_{RL}T )_{a_1 , ... , b_q }$ is given by~\eqref{eq:S_convolve_RL_T_fourier_matrix}.

Similarly, for the convolution $S\otimes_{LR}T$ we can expand it as
\begin{equation}
S\otimes_{LR}T = ( S\otimes_{LR}T )_{a_1 , ... , b_q } l^{a_1}\otimes ... \otimes l^{a_p} \otimes r^{b_1} \otimes ... \otimes r^{b_q} \;\;\; .
\end{equation}
The Fourier matrices for the components $( S\otimes_{LR}T )_{a_1 , ... , b_q }$ are then given by
\begin{equation}
( S\otimes_{LR}T )_{a_1 , ... , b_q }(\bm{\rho}) = \frac{1}{\sqrt{d_{\rho}}} (T_R )_{b_1 ... b_q}(\bm{\rho}) . (S_L )_{a_1 ... a_p}(\bm{\rho})  \;\;\; .
\end{equation}

\section{Homogeneous Spaces}
The convolution of functions on  Riemannian  homogeneous spaces $G/H$ is reasonably well-developed, cf.~for example  \cite{Farashahi:2015} and the references therein. Here we generalise to tensor fields. 
\subsection{Preliminaries}\label{homrev}

\subsubsection{Definitions}

Consider a Riemannian manifold $M$ on which a compact Lie group $G$ acts transitively. Pick a base point $\eta\in M$ and consider the subgroup $H$ for which $h\eta =\eta$ for all $h\in H$. Then $M \cong G / H$, and $G$ can be thought of as a $H$-principal bundle over $M$ with the projection $\pi \, : \, G \rightarrow M$, $ g \rightarrow \pi(g) := g \eta $.

Given some $X \in TG$, the push-forward gives the vector $\pi_* X \in TM$, which acts on a function $f\in \Omega^0(M)$, at the point $g\eta \in M$, as
\begin{equation}
(\pi_* X)_{g\eta}(f) = X_g ( \pi^* f ) = X_g ( f \circ \pi ) \;\;\; .
\end{equation}
For any vector $X_g \in T_g G$ we can find a representative curve $\gamma(\lambda )$, with $\gamma(0)=g$ and such that $\gamma(\lambda )$ is tangent to $X_g$ at $g$. Consider the special case where the vector $X_g$ has a representative curve of the form $\gamma(\lambda) = g h(\lambda )$, where $h(\lambda ) \in H$ and where $g h(0)=g \, e = g$. The action of the vector $\pi_* X_g \in T_{g\eta}M$ on any function $f\in \Omega^0(M)$ is then
\begin{align}
(\pi_* X_g )_{g\eta}(f) & = X_g ( f \circ \pi ) \nonumber
\\
& = \frac{d}{d\lambda}\left[ (f\circ \pi)(g h(\lambda ) ) \right]\Big|_{\lambda=0} \nonumber
\\
& = \frac{d}{d\lambda}\left[ f( g h(\lambda ) \eta ) \right]\Big|_{\lambda=0} \nonumber
\\
& = \frac{d}{d\lambda}\left[ f( g \eta ) \right]\Big|_{\lambda=0} \nonumber
\\
& = 0 \;\;\; ,
\end{align}
and hence $(\pi_* X_g )_{g\eta} = 0$. The vector $X_g$ is then in the kernel of the push-forward $\pi_*$. Furthermore, if $X_g \in \text{ker}(\pi_*)$ then ${L_{g'}}_* \, X_g$ and ${R_{h}}_* \, X_g$ are also in the kernel for any $g'\in G$ and any $h\in H$. This means that the kernel of $\pi_*$ is left-invariant \emph{and} $R_H$-invariant. We call the kernel of $\pi_*$ the \emph{vertical} subspace of $TG$, and denote it by $\mathcal{V}TG$. 

Using the bi-invariant metric on $G$ we can find the subspace of $TG$ that is orthogonal to $\mathcal{V}TG$. This is the \emph{horizontal} subspace, denoted by $\mathcal{H}TG$. From the bi-invariance of the metric, and the left- and $R_H$-invariance of $\mathcal{V}TG$, one can see that $\mathcal{H}TG$ is also left- and $R_H$-invariant. We can now write $TG = \mathcal{H}TG \oplus \mathcal{V}TG$. For any $X \in TG$ we have
\begin{equation}
X = \mathcal{H}X + \mathcal{V}X  \;\;\; ,
\end{equation}
where $\mathcal{H}$ ($\mathcal{V}$) is the horizontal (vertical) projector.

Recall that $\mathfrak{g}\cong T_e G$. We now have the orthogonal decomposition $\mathfrak{g} = \mathfrak{m} \oplus \mathfrak{h}$, where $\mathfrak{m} = \mathcal{H}T_e G$ and $\mathfrak{h} = \mathcal{V}T_e G$. One can also show that $\mathfrak{h}$ is the Lie algebra of the subgroup $H$.

Let us pick our basis, $\lbrace e_a \rbrace$, of $\mathfrak{g}$, such that $\lbrace e_i  \rbrace$ is a basis for $\mathfrak{m}$, where $i= 1 , ... , \text{dim}(\mathfrak{m})$, and such that $e_A$ is a basis for $\mathfrak{h}$, where $A = \text{dim}(\mathfrak{m})+1 , ... , \text{dim}(G) $. As $\mathcal{H}TG$ and $\mathcal{V}TG$ are left-invariant, the left-invariant basis $\lbrace l_a \rbrace$ also splits in a similar way, i.e. $\lbrace l_i \rbrace$ is a basis of $\mathcal{H}TG$, and $\lbrace l_A \rbrace$ is a basis of $\mathcal{V}TG$. Any vector field $X \in TG$ can then be written as
\begin{align}
X & = \mathcal{H}X + \mathcal{V}X \nonumber
\\
& = X_L^i \, l_i + X_L^A \, l_A \;\;\; .
\end{align}
For any $\omega \in \Omega^1(G)$ we can decompose it using the dual basis:
\begin{align}\label{eq:hor_vert_decomp_1_form}
\omega & = \mathcal{H}\omega + \mathcal{V}\omega \nonumber
\\
& = (\omega_L)_i \, l^i +(\omega_L)_A \, l^A \;\;\; ,
\end{align}
and similarly for any $(p,q)$-tensor field.

We note that
\begin{equation}\label{eq:RH_commutes_with_H}
{R_h}_* \circ \mathcal{H} = \mathcal{H} \circ {R_h}_* \;\;\; ,
\end{equation}
for any $h\in H$. To see this, we first note that the vertical subspace is $R_H$-invariant, which means that ${R_h}_* \, l_A$ (for any $h\in H$) can be expanded in terms of the $l_A$ basis only. From the right-invariance of the metric we know that $\lbrace {R_h}_* \, l_A \rbrace$ is an orthonormal set, and since each vector ${R_h}_* \, l_A$ lies in $\mathcal{V}TG$, we know that $\lbrace {R_h}_* \, l_A \rbrace$ is an orthonormal basis of $\mathcal{V}TG$. From the right-invariance of the metric we also know that ${R_h}_* \, l_i$ is orthogonal to each ${R_h}_* \, l_A$, and hence ${R_h}_* \, l_i \in \mathcal{H}TG$. This means that ${R_h}_* \, l_i$ can be expanded in terms of the $l_i$ basis only. As the set  $\lbrace {R_h}_* \, l_i \rbrace$ is orthonormal (from the right-invariance of the metric) we then have that $\lbrace {R_h}_* \, l_i \rbrace$ is a basis of $\mathcal{H}TG$. The preceeding argument shows that ${R_h}_* $ acts separately on the bases $\lbrace l_A \rbrace$ and $\lbrace l_i \rbrace$, and rotates them amongst themselves. This means that ${R_h}_*$ and $\mathcal{H}$ commute as desired.

For any function $f \in \Omega^0(M)$, recall that the pull-back $\pi^* f \in \Omega^0(G)$ is defined as $\pi^* f := f \circ \pi$. It is then clear that $\pi^* f$ is $R_H$-invariant. That is, $R_h^* \pi^* f = \pi^* f$, which can be verified by acting on any $g\in G$. For any $\omega \in \Omega^1(M)$ the pull-back $\pi^* \omega\in\Omega^1(G)$ is defined via its action on any $X\in TG$:
\begin{equation}
(\pi^* \omega ) (X)\big|_{g} := \omega( \pi_* X )\big|_{g\eta} \;\;\; ,
\end{equation}
for any $g\in G$.

\begin{definition}
Given some $X \in TM$, the \textbf{horizontal lift}, $\mathfrak{L}(X) \in \mathcal{H}TG$, is the unique horizontal vector field which satisfies
\begin{equation}
( \pi_* \mathfrak{L}(X)_g )_{g\eta} = X_{g\eta} \;\;\; ,
\end{equation}
for any $g\in G$.
\end{definition}
\noindent More simply, we can write $\pi_* \mathfrak{L}(X) = X$. Note that this condition implies that $\mathfrak{L}(X)$ is $R_H$-invariant, as for any $Y_g \in T_g G$ we have
\begin{equation}
( \pi_* ( {R_h}_* \, Y_g )_{gh} )_{g\eta} = ( \pi_*  Y_g )_{g\eta} \;\;\; ,
\end{equation}
for any $h\in H$. 

We note some basic identities that will be useful later. For any $X\in TM$, $Y_g \in T_g G$, $f\in \Omega^0(M)$, and $\omega \in \Omega^1(M)$,
\begin{align}
\pi_* \mathfrak{L}(X) & = X  \; , \label{eq:lift_identities_1}
\\
\mathfrak{L}( (\pi_* Y_g )_{g\eta} )_g & = \mathcal{H} Y_g  \; ,  \label{eq:lift_identities_2}
\\
(\mathfrak{L}(X))_{g}(\pi^* f) & = X_{g\eta}(f) \; ,  \label{eq:lift_identities_3}
\\
(\pi^* \omega)(\mathfrak{L}(X))\big|_g & = \omega(X)\big|_{g\eta} \; . \label{eq:lift_identities_4}
\end{align}
We will also need the following map that averages along the fiber $H$:
\begin{definition}
For any $(0,p)$-tensor field (including functions) we define the map, $A_H \, : \mathfrak{T}^0_p G \rightarrow \mathfrak{T}^0_p G$, that averages along the fiber $H$ as
\begin{equation}
A_H := \frac{1}{\mathrm{vol}(H)}\int_H dh \, R_h^* \;\;\; ,
\end{equation}
where $dh$ is the Haar measure for the compact subgroup $H$.
\end{definition}
\noindent Clearly, for any $R_H$-invariant $T \in \mathfrak{T}^0_p G$ we have $A_H T = T$. Note that, for any $h'\in H$, we have
\begin{align}
R_{h'}^* A_H   = A_H \;\;\; ,
\end{align}
where we have used the invariance of $dh$ under right action by $H$. For any $T \in \mathfrak{T}^0_p G$, we can then see that $A_H T$ is $R_H$-invariant. For any $f\in \Omega^0(G)$, the function $A_H f$ is constant along the fibers of $H$, and its value along any fiber is the average of $f$ along that fiber.

Given~\eqref{eq:RH_commutes_with_H}, we have
\begin{equation}\label{eq:AH_and_H_commute}
A_H \circ \mathcal{H} = \mathcal{H} \circ A_H \;\;\; ,
\end{equation}
where we recall that $\mathcal{H}$ projects onto the horizontal subspace.

Using $\pi_*$ we can push-forward any $X\in TG$ to $TM$, and using $\pi^*$ we can pull-back any $T\in \mathfrak{T}^0_p M$. We can also lift any vector field $X\in TM$ from $TM$ to $TG$ with $\mathfrak{L}(\cdot )$. We now define the analogous map that takes some $T\in \mathfrak{T}^0_p G$ to a $(0,p)$-tensor field in $\mathfrak{T}^0_p M$:
\begin{definition}
For any $f\in \Omega^0(G)$, the $\pi_H$-projected function $\pi_H f \in \Omega^0(M)$ is the unique function on $M$ that satisfies
\begin{equation}\label{eq:pi_H_on_functions}
\pi^* \pi_H f = A_H f \;\;\; .
\end{equation}
\end{definition}
\noindent The value of $\pi_H f$ at any point $g\eta\in M$ is then the average of $f$ along the fiber above $g\eta$. Similarly, we have:
\begin{definition}\label{def:pi_H_proj_for_tensor}
For any $T \in \mathfrak{T}^0_p G$, the $\pi_H$-projected tensor field $\pi_H T \in \mathfrak{T}^0_p M$ is the unique tensor field on $M$ that satisfies
\begin{equation}
\pi^* \pi_H T =  \mathcal{H} A_H T \;\;\; .
\end{equation}
\end{definition}
\noindent It is straightforward to verify the uniqueness of $\pi_H T$ using the isomorphism between $TM$ and horizontal $R_H$-invariant vector fields in $TG$.

\subsubsection{Covariant Differentiation}

Consider the Levi-Civita connection, $\nabla$, on $G$. One can see that $\nabla_{\mathfrak{L}(X)}\mathfrak{L}(Y)$ is $R_H$-invariant for any $X, Y \in TM$ as follows:
\begin{align}
{R_h}_* \, \nabla_{\mathfrak{L}(X)}\mathfrak{L}(Y) & = \nabla_{{R_h}_*\, \mathfrak{L}(X)} {R_h}_* \, \mathfrak{L}(Y) \nonumber
\\
& = \nabla_{\mathfrak{L}(X)} \mathfrak{L}(Y) \;\;\; ,
\end{align}
where the first line follows from the fact that $R_h$ is an isometry on $G$, and the second line follows from the $R_H$-invariance of $\mathfrak{L}(X)$ and $\mathfrak{L}(Y)$. This allows us to make the following
\begin{definition}\label{def:cov_deriv_homog_def}
For any two vector fields $X, Y \in TM$, we define the connection on $M$, also denoted by $\nabla$, at a point $g\eta \in M$, as
\begin{equation}\label{eq:cov_deriv_homog_def}
\nabla_X Y \big|_{g\eta} := \pi_*  \left( \nabla_{\mathfrak{L}(X)} \mathfrak{L}(Y) \big|_g \right) \;\;\; ,
\end{equation}
where the RHS is independent of the choice of $g$ in the fiber from the previously established $R_H$-invariance of $\nabla_{\mathfrak{L}(X)} \mathfrak{L}(Y)$.
\end{definition}
\noindent In Appendix~\ref{sec:levi-civita_connection} we show that $\nabla$, on $M$, is the Levi-Civita connection on $M$ with respect to the $G$-invariant metric induced on $M$ via the bi-invariant metric on $G$. The covariant derivative of any tensor field on $M$ can then be defined in the usual way. We also define the symmetrised covariant derivative on functions and 1-forms in the analogous way to the definition,~\eqref{eq:def_symm_cov_deriv}, on $G$.

From the above definition, it is straightforward to see that
\begin{equation}\label{eq:cov_deriv_homog_lift}
\mathfrak{L}( \nabla_X Y ) = \mathcal{H} \nabla_{\mathfrak{L}(X)} \mathfrak{L}(Y) \;\;\; .
\end{equation}
One can also show that $\pi^* \nabla f = \nabla \pi^* f$ for any $f \in \Omega^0(M)$ as follows. For any $X\in TG$ we have
\begin{align}\label{eq:cov_pi_pull_back_commute_on_f}
(\pi^* \nabla f)(X)\big|_g & = (\nabla f)(\pi_* X)\big|_{g\eta} \nonumber
\\
& = ( \pi_* X )_{g \eta}(f) \nonumber
\\
& = X_g ( \pi^* f ) \nonumber
\\
& = (\nabla \pi^* f )(X)\big|_g \;\;\; .
\end{align}
As $\nabla f= d f$, this is simply the statement that the pull-back commutes with $d$.

For any $\omega \in \Omega^1(G)$, and any horizontal vector fields $X,Y \in \mathcal{H}TG$, we have the identity
\begin{equation}\label{eq:sym_cov_deriv_commutes_with_horizontal_proj}
(\nabla \omega )(X,Y) = (\nabla \mathcal{H} \omega )(X,Y) \;\;\; ,
\end{equation}
which can be seen as follows. Recall that $X = X_L^i \, l_i$ for any $X \in \mathcal{H}TG$, and similarly $\mathcal{H}\omega = (\omega_L)_i \, l^i$ for any $\omega\in\Omega^1(G)$. We have
\begin{align}
(\nabla \omega )(X,Y) & = X((\omega_L)_i)Y_L^i + Y((\omega_L)_i)X_L^i \nonumber
\\
& = ( \nabla \mathcal{H} \omega )( X, Y)
\end{align}
where we have used~\eqref{eq:sym_cov_deriv_omega_components} in line 1. 

For any $\omega\in\Omega^1(M)$, the symmetrised covariant derivative on $M$ also takes on a simple form. To see this we first note that, for any $X,Y\in TM$,
\begin{align}
(\nabla_X \omega )(Y)\big|_{g\eta} & = X(\omega(Y))\big|_{g\eta} - \omega( \nabla_X Y )\big|_{g\eta}
\nonumber
\\
& = \mathfrak{L}(X)\left( (\pi^* \omega )(\mathfrak{L}(Y)) \right) \big|_{g} - (\pi^* \omega )\left( \nabla_{\mathfrak{L}( X)} \mathfrak{L}( Y) \right)\big|_g
\nonumber
\\
& = ( \nabla_{\mathfrak{L}(X)}\pi^* \omega )(\mathfrak{L}(Y))\big|_g \;\;\; ,
\end{align}
which means that, after symmetrising over any pair $X,Y \in TM$, we have
\begin{equation}
(\nabla \omega )(X,Y)\big|_{g\eta} = \left( \nabla ( \pi^* \omega )\right)( \mathfrak{L}(X) , \mathfrak{L}(Y) )\big|_g \;\;\; .
\end{equation}
We also have the following derivation:
\begin{align}
(\nabla \omega )(X,Y)\big|_{g\eta} & = (\nabla \omega )(\pi_* \mathfrak{L}(X) , \pi_* \mathfrak{L}(Y))\big|_{g\eta}
\nonumber
\\
& = \left( \pi^*\nabla \omega \right)( \mathfrak{L}(X) ,  \mathfrak{L}(Y))\big|_g \;\;\; ,
\end{align}
which implies that
\begin{equation}\label{eq:pi_nabla_commute_on_omega_on_lifted_vecs}
\pi^* \nabla \omega = \nabla \pi^* \omega \;\;\; ,
\end{equation}
when acting on lifted vector fields $\mathfrak{L}(TM)\subset TG$.

\subsection{Convolution}\label{convhom}

\subsubsection{Definition}

Consider a compact manifold $M$ on which a compact Lie group $G$ acts transitively. We make the following
\begin{definition}\label{def:conv_def_homogeneous}
For any $S \in \mathfrak{T}^0_p M$ and any $T\in \mathfrak{T}^0_q M$, we define the convolution $ S \otimes_{AB} T \in \mathfrak{T}^0_{p+q}M$, for $AB=RL,LR$, as
\begin{equation}\label{eq:conv_def_homogeneous}
S \otimes_{AB} T := \pi_H \left( \pi^* S \otimes_{AB} \pi^* T \right) \;\;\; .
\end{equation}
where the convolution $\pi^* S \otimes_{AB} \pi^* T$ on the RHS is the convolution defined on $G$ in \textbf{Def}.~\ref{def:tensor_convolution_definition_G}.
\end{definition}
\noindent In words, the convolution of two fields on $M$ consists of first pulling them back to the group manifold $G$, convolving them on $G$, averaging over the fiber $H$, then projecting the result back down to $M$.

Recall that the $\pi_H$-projection involves averaging over the fiber and projecting onto the horizontal component. In fact, the averaging is not necessary as $ \pi^* S \otimes_{AB} \pi^* T$ is $R_H$-invariant. To see this, we focus on the $\otimes_{RL}$ and $\otimes_{LR}$ convolutions separately.

For any $h\in H$ we have
\begin{align}
R_h^* ( \pi^* S \otimes_{RL} \pi^* T ) & = \int_G dg\, R_h^* \mathfrak{r}_g \pi^* S \otimes R_h^* L_{g^{-1}}^* \pi^* T
\nonumber 
\\
& = \int_G dg\, \mathfrak{r}_g \pi^* S \otimes L_{g^{-1}}^* R_h^*  \pi^* T
\nonumber
\\
& = \int_G dg\, \mathfrak{r}_g \pi^* S \otimes L_{g^{-1}}^* \pi^* T
\nonumber
\\
& = \pi^* S \otimes_{RL} \pi^* T \;\;\; ,
\end{align}
where from line 1 to 2 we used the fact that $R_h^* \mathfrak{r}_g = \mathfrak{r}_g$ (as $\mathfrak{r}_g$ already creates a right-invariant field), and the fact that right and left pull-backs commute. From line 2 to 3 we used the fact that any pulled-back field, e.g. $\pi^* T$, is $R_H$-invariant.

For the $\otimes_{LR}$ convolution we have
\begin{align}
R_h^* ( \pi^* S \otimes_{LR} \pi^* T ) & = \int_G dg\, R_h^* \mathfrak{l}_g \pi^* S \otimes R_h^* R_{g^{-1}}^* \pi^* T
\nonumber 
\\
& = \int_G dg\, R_h^* \mathfrak{l}_g \pi^* S \otimes  R_{hg^{-1}}^* \pi^* T \;\;\; ,
\end{align}
using the fact that $R_g^* R_{g'}^* = R_{gg'}^*$. In order to proceed we need to determine how $R_h^*$ and $\mathfrak{l}_g$ commute. Consider some $\omega\in\Omega^1(G)$, then $R_h^* \mathfrak{l}_g \omega$ evaluated at $g'\in G$ is
\begin{align}
( R_h^* \mathfrak{l}_g \omega )\big|_{g'} & = ( R_h^* ( \mathfrak{l}_g \omega )\big|_{g'h} )\big|_{g'}
\nonumber
\\
& = ( R_h^* ( L^*_{gh^{-1}{g'}^{-1}} \omega )\big|_{g'h} )\big|_{g'} 
\nonumber
\\
& = ( R_h^* L^*_{gh^{-1}{g'}^{-1}} \omega )\big|_{g'}
\nonumber
\\
& = ( L^*_{gh^{-1}{g'}^{-1}} ( R_h^* \omega ) )\big|_{g'}
\nonumber
\\
& = ( \mathfrak{l}_{gh^{-1}} R_h^* \omega )\big|_{g'} \;\;\; ,
\end{align}
where from line 1 to 2 we used the definition of $\mathfrak{l}_g$ in~\eqref{eq:definition_mathfrak_lr_maps}. In summary, $R_h^* \mathfrak{l}_g = \mathfrak{l}_{gh^{-1}}R_h^*$. This then means that
\begin{align}
R_h^* ( \pi^* S \otimes_{LR} \pi^* T ) & = \int_G dg\, \mathfrak{l}_{gh^{-1}} R_h^* \pi^* S \otimes  R_{hg^{-1}}^* \pi^* T
\nonumber 
\\
& = \int_G d\tilde{g}\, \mathfrak{l}_{\tilde{g}} R_h^* \pi^* S \otimes  R_{\tilde{g}^{-1}}^* \pi^* T 
\nonumber
\\
& = \pi^* S \otimes_{LR} \pi^* T \;\;\; ,
\end{align}
where from line 1 to 2 we changed integration variables to $\tilde{g} = gh^{-1}$.

Now that we have established the $R_H$ invariance of $\pi^* S \otimes_{AB} \pi^* T$, we get the following simplification of the convolution $S \otimes_{AB} T$ on $M$, when acting on test vector fields $X_1 , X_2, ... \in TM$:
\begin{align}\label{eq:simplification_tensor_convolution_on_M}
( S \otimes_{AB} T )(X_1 , ... ) & = ( S \otimes_{AB} T )(\pi_* \mathfrak{L}( X_1 ) , ... )
\nonumber
\\
& = (\pi^*( S \otimes_{AB} T ))(\mathfrak{L}( X_1 ) , ... )
\nonumber
\\
& = (\pi^* \pi_H ( \pi^* S \otimes_{AB} \pi^* T ))(\mathfrak{L}( X_1 ) , ... )
\nonumber
\\
& = ( \mathcal{H}A_H ( \pi^* S \otimes_{AB} \pi^* T ) ) (\mathfrak{L}( X_1 ) , ... )
\nonumber
\\
& = ( \pi^* S \otimes_{AB} \pi^* T )(\mathfrak{L}( X_1 ) , ... ) \;\;\; ,
\end{align}
where we have used~\eqref{eq:lift_identities_1} in line 1, \textbf{Def}.~\eqref{def:conv_def_homogeneous} of the convolution on $M$ from line 2 to 3, and \textbf{Def}.~\eqref{def:pi_H_proj_for_tensor} of $\pi_H$ from line 3 to 4. To get the final line we used the fact that the horizontal projection $\mathcal{H}$ can be removed as the tensor is already acting on horizontal lifted vector fields, and the fact that $A_H$ acts trivially as $\pi^* S \otimes_{AB} \pi^* T$ is $R_H$-invariant.

In words, we have just shown that the convolution of two tensors, $S$ and $T$ on $M$, acts on vector fields on $M$ as the convolution of the pulled-back tensors, $\pi^*S$ and $\pi^*T$ on $G$, on the corresponding lifted vector fields. For the convolution of a pair of functions, $f\vee_{AB}\tilde{f}$, this simplifies to the statement that the value of $f\vee_{AB}\tilde{f}$ at the point $g\eta\in M$ is equal to the value of $\pi^* f \vee_{AB} \pi^* \tilde{f}$ at any point in the fiber above $g\eta$.

\subsubsection{Differentiation}\label{diffhom}

In this section we will be concerned with the symmetrised covariant derivative of the symmetrised convolution $\vee_{AB}$. For our purposes, we only need to consider the convolution $S\vee_{AB}T$, where $S$ and $T$ are either both functions, or one of them is a function and the other is a 1-form. The symmetrised covariant derivative of the convolution, $\nabla ( S\vee_{AB}T )$, is then either a 1-form or a symmetric $(0,2)$-tensor respectively, and hence acts on either one or two vector fields. We write $( \nabla ( S\vee_{AB}T )) (X_1 , ... )$, where $X_1 , ... \in TM$, to cover both these cases.

We have
\begin{align}
( \nabla ( S\vee_{AB}T )) (X_1 , ... ) & = ( \nabla ( S\vee_{AB}T ))( \pi_* \mathfrak{L}( X_1 ) , ... )
\nonumber
\\
& = ( \pi^* \nabla ( S\vee_{AB}T ) )( \mathfrak{L}( X_1 ) , ... )
\nonumber
\\
& = ( \nabla \pi^*  ( S\vee_{AB}T ) )( \mathfrak{L}( X_1 ) , ... )
\nonumber
\\
& = ( \nabla \pi^* \pi_H  ( \pi^* S\vee_{AB} \pi^* T ) )( \mathfrak{L}( X_1 ) , ... )
\nonumber
\\
& = ( \nabla \mathcal{H}A_H  ( \pi^* S\vee_{AB} \pi^* T ) )( \mathfrak{L}( X_1 ) , ... )
\nonumber
\\
& = ( \nabla ( \pi^* S\vee_{AB} \pi^* T ) )( \mathfrak{L}( X_1 ) , ... )
\;\;\; ,
\end{align}
where from line 2 to 3 we have used the fact that, for any function or 1-form $U$ on $M$, $\pi^* \nabla U = \nabla \pi^* U$ when acting on lifted vector fields (see~\eqref{eq:pi_nabla_commute_on_omega_on_lifted_vecs}). From line 5 to 6 we have used~\eqref{eq:sym_cov_deriv_commutes_with_horizontal_proj} to remove the horizontal projection $\mathcal{H}$. We also removed $A_H$ as it acts trivially on the $R_H$-invariant $\pi^* S\vee_{AB} \pi^* T$.

We can now utilise the derivative rules for the convolution on $G$ (\textbf{Prop}.~\eqref{prop:cov_deriv_property_on_G} and \textbf{Prop}.~\eqref{prop:cov_deriv_property_2_on_G}) to write
\begin{equation}\label{eq:almost_deriv_rule_on_M}
\nabla ( \pi^* S\vee_{AB} \pi^* T ) = \nabla \pi^* S\vee_{AB} \pi^* T = \pi^* S\vee_{AB} \nabla \pi^* T  \;\;\; .
\end{equation}
If $S$ is a function, we can use the fact that $\pi^*$ and $\nabla$ commute on functions to write
\begin{equation}
\nabla \pi^* S\vee_{AB} \pi^* T =  \pi^* \nabla S\vee_{AB} \pi^* T \;\;\; ,
\end{equation}
and hence
\begin{align}
( \nabla ( S\vee_{AB}T )) (X_1 , ... ) & = (\pi^* \nabla S\vee_{AB} \pi^* T)(\mathfrak{L}(X_1 ) ) 
\nonumber
\\
& = ( \nabla S\vee_{AB}T ) (X_1 , ... ) \;\;\; ,
\end{align}
using~\eqref{eq:simplification_tensor_convolution_on_M} to get the last line. For the same reason, $\nabla$ can be moved onto $T$ if $T$ is a function. This establishes the analogue of \textbf{Prop}.~\eqref{prop:cov_deriv_property_on_G} for the convolution on $M$.

The analogue of \textbf{Prop}.~\eqref{prop:cov_deriv_property_2_on_G} is more complicated. Following \textbf{Prop}.~\eqref{prop:cov_deriv_property_2_on_G} we take $S\in\Omega^0(M)$ and $T\in\Omega^1(M)$. From~\eqref{eq:almost_deriv_rule_on_M} we then have
\begin{align}
( \nabla ( S\vee_{AB} T ))( X_1 , ...) & = (\pi^* S\vee_{AB} \nabla \pi^* T )(\mathfrak{L}(X_1) , ... ) 
\nonumber
\\
& = \int_G dg \, ( \pi^* S )(g) ( B_{g^{-1}}^* \nabla \pi^* T )(\mathfrak{L}(X_1) , ... )  \;\;\; .
\end{align}
If we take $B=L$, i.e. we are considering the $\vee_{RL}$ convolution, then
\begin{align}
( L_{g^{-1}}^* \nabla \pi^* T )(\mathfrak{L}(X_1) , ... ) & = (  \nabla \pi^* T )( {L_{g^{-1}}}_* \mathfrak{L}(X_1) , ... ) 
\nonumber
\\
& = ( \pi^* \nabla  T )( {L_{g^{-1}}}_* \mathfrak{L}(X_1) , ... )
\nonumber
\\
& = ( L_{g^{-1}}^* \pi^* \nabla  T )( \mathfrak{L}(X_1) , ... )  \;\;\; .
\end{align}
To go from line 1 to 2 we have used the fact that the left push-forward of a lifted vector field is still a lifted vector field (that is, it is horizontal and $R_H$-invariant), and hence we can apply~\eqref{eq:pi_nabla_commute_on_omega_on_lifted_vecs} to commute $\nabla$ and $\pi^*$. This then means that
\begin{align}
( \nabla ( S\vee_{RL} T ))( X_1 , ...) & =  \int_G dg \, ( \pi^* S )(g) ( L_{g^{-1}}^* \pi^* \nabla T )(\mathfrak{L}(X_1) , ... )
\nonumber
\\
& = ( \pi^* S \vee_{RL} \pi^* \nabla T )(\mathfrak{L}(X_1) , ... ) 
\nonumber
\\
& = ( S\vee_{RL} \nabla T )( X_1 , ...)
\;\;\; ,
\end{align}
which establishes the analogue of \textbf{Prop}.~\eqref{prop:cov_deriv_property_2_on_G} for the $\vee_{RL}$ convolution on $M$. From the symmetry of $\vee_{AB}$, we also have $\nabla ( T \vee_{LR} S ) = \nabla T \vee_{LR} S$, where $S\in\Omega^0(M)$ and $T\in\Omega^1(M)$.

If we take $B=R$, i.e. we are considering $\nabla( S\vee_{LR}T )$ with $S\in\Omega^0(M)$ and $T\in\Omega^1(M)$, we cannot make the same simplification, as the right push-forward of a lifted vector field is \emph{not} a lifted vector field. It is not clear then whether the derivative rule $\nabla ( S \vee_{LR} T ) = S \vee_{LR} \nabla T$ is true when $S\in\Omega^0(M)$ and $T\in\Omega^1(M)$. This potential asymmetry between the $LR$- and $RL$-convolutions may be expected for a homogeneous space, as the manifold $M$ is identified with the \emph{left} coset space $\lbrace gH \, | \, g\in G\rbrace$.

\section{Discussion}

We introduced convolution products for tensor fields on group manifolds and homogeneous spaces and demonstrated that for symmetric convolutions of scalars and 1-forms they satisfy the usual (non-Leibniz) derivative rule. This followed from the observation that group manifolds, and homogeneous spaces, come with a natural path-independent notion of transporting tensors in a   manner compatible with the covariant derivative. This allowed us to apply the convolutions  to the construction of double copy field dictionaries for the BRST complex of graviton theory, considered to linear order in perturbation theory on these background manifolds. The robustness under BRST transformations was shown to follow from the convolution properties. We extended the  construction to static universe backgrounds by the  addition of a time direction.

It was shown that the spin-2 gauge transformations of the free graviton are reproduced correctly via the convolution. However, one can also  generate the dilaton and 2-form Kalb-Ramond field,  in the latter case by taking the two Yang--Mills BRST complexes to be different. Moreover, $p$-forms are expected to appear in more general double-copy constructible theories, particularly in the presence of supersymmetry. We leave the construction of a convolution which can adequately describe these types of fields to future work. Encouragingly, this was shown to exist in the simple case of the 2-sphere in \cite{Borsten:2019prq}.   

Another natural question is whether one can formulate the convolution on backgrounds of cosmological interest, such as de Sitter (perhaps making use of the fact that it is foliated by spheres in global coordinates, for which we already know how to define double copy dictionaries), or anti-de Sitter, which would allow us to explore potential links to holography. An important extension would be to construct the gravitational theories to higher orders perturbatively on these backgrounds, as was done for flat backgrounds in \cite{Luna:2016hge,Borsten:2020xbt,Borsten:2020zgj, Borsten:2021hua}. Here one could exploit recent developments extending color-kinematics duality to these spaces \cite{Armstrong:2020woi,Albayrak:2020fyp}.

\acknowledgments

LB is supported by the Leverhulme Research Project Grant RPG-2018-329. SN is supported by STFC grant ST/T000686/1. IJ is supported by the DIAS Schr\"{o}dinger Scholarship.

\appendix

\section{Associativity}\label{sec:app_associativity}

We first note the identities
\begin{equation}
R_g^* \, R_h^* = R_{gh}^* \;\;\; , \;\;\; L_g^* \, L_h^* = L_{hg}^* \;\;\; ,
\end{equation}
for any $g, h \in G$. From the definition of $\mathfrak{r}_g$, we can also see that, for any $T \in \mathfrak{T}^0_p G$,
\begin{align}
( \mathfrak{r}_g \, \mathfrak{r}_h \, T )\big|_{g'} & = R_{{g'}^{-1}g}^* \, ( \mathfrak{r}_h \, T  )_g \nonumber
\\
& =  R_{{g'}^{-1}g}^* \,  R_{g^{-1}h}^* \,  T_h \nonumber
\\
& =  R_{{g'}^{-1}h}^* \, T_h \nonumber
\\
& = (\mathfrak{r}_h \, T )\big|_{g'} \;\;\; ,
\end{align}
for any $g , h \in G$. Similarly,
\begin{align}
( \mathfrak{l}_g \, \mathfrak{l}_h \, T )\big|_{g'} & = L_{g{g'}^{-1}}^* \, ( \mathfrak{l}_h \, T  )_g \nonumber
\\
& =  L_{g{g'}^{-1}}^* \,  L_{h g^{-1}}^* \,  T_h \nonumber
\\
& =  L_{h{g'}^{-1}}^* \, T_h \nonumber
\\
& = (\mathfrak{l}_h \, T )\big|_{g'} \;\;\; .
\end{align}
That is, have the identities
\begin{equation}
\mathfrak{r}_g \, \mathfrak{r}_h = \mathfrak{r}_h \;\;\; , \;\;\; \mathfrak{l}_g \, \mathfrak{l}_h = \mathfrak{l}_h \;\;\; .
\end{equation}
We will also need the following identity:
\begin{equation}
\mathfrak{r}_h \, L_{g^{-1}}^* = L_{g^{-1}}^* \, \mathfrak{r}_{g^{-1}h} \;\;\; .
\end{equation}
which can be seen by considering the action of both the left and right sides of the above equation on some $(0,p)$-tensor $T$. Take the LHS:
\begin{align}
( \mathfrak{r}_h \, L_{g^{-1}}^* \, T )\big|_{g'} & = R_{{g'}^{-1}h}^* ( L_{g^{-1}}^* \, T  )\big|_h \nonumber
\\
& = R_{{g'}^{-1}h}^* L_{g^{-1}}^* \, T_{g^{-1}h} \;\;\; ,
\end{align}
and then the RHS:
\begin{align}
( L_{g^{-1}}^*  \, \mathfrak{r}_{g^{-1}h} \, T )\big|_{g'} & = L_{g^{-1}}^* ( \mathfrak{r}_{g^{-1}h} \, T  )\big|_{g^{-1}g'} \nonumber
\\
& = L_{g^{-1}}^* \,  R_{{g'}^{-1}g g^{-1} h}^* \, T_{g^{-1}h} \nonumber
\\
& =  R_{{g'}^{-1}g g^{-1} h}^* \, L_{g^{-1}}^* \, T_{g^{-1}h} \nonumber
\\
& =  R_{{g'}^{-1} h}^* \, L_{g^{-1}}^* \, T_{g^{-1}h} \;\;\; .
\end{align}
We are now ready to prove the associativity of $\otimes_{RL}$.

\noindent \textit{Proof}
\begin{align}
S \otimes_{RL} ( T \otimes_{RL} U ) & = \int_G dg  \, \mathfrak{r}_g S \otimes L_{g^{-1}}^* ( T \otimes_{RL} U )  \nonumber
\\
& = \int_G dg  \, \mathfrak{r}_g S \otimes L_{g^{-1}}^* \left( \int_G dh \, \mathfrak{r}_h T \otimes L_{h^{-1}}^* U \right)  \nonumber
\\
& = \int_{G \times G} dg \, dh \, \mathfrak{r}_g S \otimes L_{g^{-1}}^*  \mathfrak{r}_h T \otimes L_{g^{-1}}^* L_{h^{-1}}^* U \nonumber
\\
& = \int_{G \times G} dg \, dh \, \mathfrak{r}_g S \otimes L_{g^{-1}}^*  \mathfrak{r}_h T \otimes L_{h^{-1}g^{-1}}^* U \;\;\; ,
\end{align}
where we have used the fact that the pull-back distributes onto both sides of a tensor product. If we change integration variables from $h$ to $\tilde{h} = gh$, we get
\begin{align}
S \otimes_{RL} ( T \otimes_{RL} U ) & = \int_{G \times G} dg \, d\tilde{h} \, \mathfrak{r}_g S \otimes L_{g^{-1}}^*  \mathfrak{r}_{g^{-1}\tilde{h}} T \otimes L_{\tilde{h}^{-1}}^* U \nonumber
\\
& =  \int_{G \times G} dg \, d\tilde{h} \, \mathfrak{r}_g S \otimes \mathfrak{r}_{\tilde{h}}  L_{g^{-1}}^* T \otimes L_{\tilde{h}^{-1}}^* U \nonumber
\\
& = \int_{G \times G} dg \, d\tilde{h} \, \mathfrak{r}_{\tilde{h}} \left(  \mathfrak{r}_g S \otimes L_{g^{-1}}^* T \right) \otimes L_{\tilde{h}^{-1}}^* U \nonumber
\\
& = \int_G d\tilde{h} \,  \mathfrak{r}_{\tilde{h}} \left( \int_G dg \, \mathfrak{r}_g S \otimes L_{g^{-1}}^* T \right) \otimes L_{\tilde{h}^{-1}}^* U \nonumber
\\
& = ( S \otimes_{RL} T ) \otimes_{RL} U \;\;\; .
\end{align} 
A similar calculation can be done to prove the associativity of $\otimes_{LR}$.

\section{Levi-Civita connection on $M$}\label{sec:levi-civita_connection}

For clarity, in this appendix we denote the bi-invariant metric on $G$ as $\tilde{\mathtt{g}}$, and the corresponding Levi-Civita connection on $G$ as $\tilde{\nabla}$. Recall that bi-invariance of $\tilde{\mathtt{g}}$ means that $L^*_g \tilde{\mathtt{g}} = R^*_{g'} \tilde{\mathtt{g}} = \tilde{\mathtt{g}}$, for any $g, g' \in G$.

$\tilde{\mathtt{g}}$ induces a metric, $\mathtt{g}$, on the base space $M$ in the following way. For any point $g\eta\in M$, and any $X_{g\eta}, Y_{g\eta} \in T_{g\eta}M$, we find the horizontal lifts $\mathfrak{L}(X)_g , \mathfrak{L}(Y)_g \in T_g G$, and define
\begin{equation}
\mathtt{g}(X_{g\eta}, Y_{g\eta}) := \tilde{\mathtt{g}}(\mathfrak{L}(X)_g , \mathfrak{L}(Y)_g) \;\;\; .
\end{equation}
Crucially, the RHS is independent of the point $g$ in the fiber above $g\eta$. This follows from the right-invariance of $\tilde{\mathtt{g}}$ and the $R_H$-invariance of the horizontal lifts.

The left action of $G$ on $M$, $\sigma_g \, : \, M \rightarrow M$, $x\rightarrow \sigma_g (x) = gx = g g' \eta$ (for any $g'\in G$ such that $g' \eta = x$), is a diffeomorphism of $M$. Given the corresponding pull-back, $\sigma_g^*$, we say that $\mathtt{g}$ is \emph{$G$-invariant} if $\sigma_g^* \mathtt{g} = \mathtt{g}$ for all $g\in G$. For completeness, we note the following
\begin{prop}
The induced metric, $\mathtt{g}$, is $G$-invariant.
\end{prop}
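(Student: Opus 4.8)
The plan is to exploit the fact that left multiplication on $G$ covers the left action $\sigma_g$ on $M$, and that both the horizontal distribution $\mathcal{H}TG$ and the bi-invariant metric $\tilde{\mathtt{g}}$ are left-invariant.

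First I would record the intertwining relation $\pi \circ L_g = \sigma_g \circ \pi$, which is immediate from the definitions since $\pi(L_g g') = g g' \eta = \sigma_g(g'\eta) = \sigma_g(\pi(g'))$; differentiating gives $\pi_* \circ {L_g}_* = {\sigma_g}_* \circ \pi_*$. The key step is then the lemma that ${L_g}_* \mathfrak{L}(X) = \mathfrak{L}({\sigma_g}_* X)$ for every vector field $X$ on $M$ and every $g\in G$. To prove it, observe that ${L_g}_* \mathfrak{L}(X)$ is horizontal because $\mathcal{H}TG$ is left-invariant (established in the preliminaries above), and that it projects down correctly: $\pi_*\big( {L_g}_* \mathfrak{L}(X) \big) = {\sigma_g}_*\big( \pi_* \mathfrak{L}(X) \big) = {\sigma_g}_* X$ by the intertwining relation together with \eqref{eq:lift_identities_1}. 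Since $\mathfrak{L}({\sigma_g}_* X)$ is by definition the \emph{unique} horizontal vector field with this property, the two must coincide.

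Granting the lemma, the verification that $\sigma_g^* \mathtt{g} = \mathtt{g}$ is a short substitution. Fixing a point $x = g'\eta \in M$ and vectors $X_x, Y_x \in T_x M$, and writing $\mathtt{g}$ at the point $\sigma_g(x) = g g' \eta$ via the horizontal lift at the fibre representative $gg'$,
\begin{align}
(\sigma_g^* \mathtt{g})(X_x, Y_x) & = \mathtt{g}\big( {\sigma_g}_* X_x , {\sigma_g}_* Y_x \big)\big|_{g g' \eta} \nonumber \\
& = \tilde{\mathtt{g}}\big( \mathfrak{L}({\sigma_g}_* X)_{g g'} , \mathfrak{L}({\sigma_g}_* Y)_{g g'} \big) \nonumber \\
& = \tilde{\mathtt{g}}\big( {L_g}_* \mathfrak{L}(X)_{g'} , {L_g}_* \mathfrak{L}(Y)_{g'} \big) \nonumber \\
& = \tilde{\mathtt{g}}\big( \mathfrak{L}(X)_{g'} , \mathfrak{L}(Y)_{g'} \big) \nonumber \\
& = \mathtt{g}(X_x, Y_x) \;\;\; ,
\end{align}
where line 3 uses the lemma and line 4 uses left-invariance $L_g^* \tilde{\mathtt{g}} = \tilde{\mathtt{g}}$. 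As $x$, $X_x$, $Y_x$ and $g$ were arbitrary, this gives $G$-invariance.

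The displayed calculation is essentially trivial once the lemma is in place, so the only real content — and the main obstacle — is the bookkeeping with base points: one must be sure that $\sigma_g$ is a genuine diffeomorphism of $M$ (so ${\sigma_g}_* X$ is a well-defined vector field), and that the value of $\mathtt{g}$ at $g g' \eta$ computed from the lift at $g g'$ is independent of the fibre representative. Both are already supplied by the remarks preceding the proposition — the fibre-independence being a consequence of the right-invariance of $\tilde{\mathtt{g}}$ together with the $R_H$-invariance of the horizontal lifts — so no new work is needed there.
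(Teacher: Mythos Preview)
Your proof is correct and follows essentially the same strategy as the paper: both exploit the intertwining $\pi\circ L_g=\sigma_g\circ\pi$, the left-invariance of the horizontal distribution, and the left-invariance of $\tilde{\mathtt{g}}$. The only difference is organizational: you isolate the clean lemma ${L_g}_*\mathfrak{L}(X)=\mathfrak{L}({\sigma_g}_*X)$ and apply it directly, whereas the paper routes the computation through the pullback identity $\pi^*\sigma_g^*\mathtt{g}=L_g^*\pi^*\mathtt{g}$ together with the auxiliary fact $(\pi^*\mathtt{g})(\tilde X,\tilde Y)=\tilde{\mathtt{g}}(\tilde X,\tilde Y)$ on horizontal vectors; these are equivalent repackagings of the same content.
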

\begin{proof}
First, consider the pull-back $\sigma_g^*$ acting on a function $f\in \Omega^0(M)$. For some point $g'\eta \in M$ we have
\begin{align}
(\sigma_g^* f )(g'\eta) & = f(\sigma_g(g'\eta )) \nonumber
\\
& = f(gg'\eta ) \nonumber
\\
& = (\pi^* f)(g g') \nonumber
\\
& = (L_g^* \pi^* f)(g') \;\;\; .
\end{align}
We also have
\begin{equation}
(\sigma_g^* f )(g'\eta) = ( \pi^* \sigma_g^* f )(g' )\;\;\; ,
\end{equation}
and hence
\begin{equation}
\pi^* \sigma_g^* f = L_g^* \pi^* f \;\;\; ,
\end{equation}
for any $g\in G$ and any function $f\in\Omega^0(M)$. Following the definition of the push-forward $\pi_*$ on a vector field $X\in TG$, one can verify that 
\begin{equation}
{\sigma_g}_* \pi_* X = \pi_* {L_g}_* X \;\;\; .
\end{equation}
Similarly, for any $T\in \mathfrak{T}^0_p M$, one finds
\begin{equation}
\pi^* \sigma_g^* T = L_g^* \pi^* T \;\;\; .
\end{equation}

Now, consider some $g'\in G$, and two vectors $X_{g'\eta}, Y_{g'\eta} \in T_{g'\eta}M$. We have
\begin{align}\label{eq:g_inv_derivation_1}
(\sigma_g^* \mathtt{g})(X_{g'\eta} , Y_{g'\eta}) & = (\sigma_g^* \mathtt{g})(
{( \pi_* { \mathfrak{L}{( X_{g'\eta} )}  }_{g'}  )}_{g'\eta}
,
{( \pi_* {  \mathfrak{L}{( Y_{g'\eta}  )}_{g'}   }  )}_{g'\eta}
) \nonumber
\\
& = (\pi^* \sigma_g^* \mathtt{g})(
{ \mathfrak{L}{( X_{g'\eta} )}  }_{g'}
,
{  \mathfrak{L}{( Y_{g'\eta}  )}_{g'}   } 
) \nonumber
\\
& = (L_g^* \pi^* \mathtt{g} )(
{ \mathfrak{L}{( X_{g'\eta} )}  }_{g'}
,
{  \mathfrak{L}{( Y_{g'\eta}  )}_{g'}   } 
) \nonumber
\\
& = (\pi^* \mathtt{g} )(
{( {L_g}_*{  \mathfrak{L}{( X_{g'\eta} )}  }_{g'} )}_{gg'}
,
{( {L_g}_*{  \mathfrak{L}{( Y_{g'\eta}  )}_{g'}   } )}_{gg'}
) \;\;\; ,
\end{align}
where we have used~\eqref{eq:lift_identities_1} in line 1. We then note that ${L_g}_* \mathfrak{L}(X)$ is horizontal for any $X \in TM$, and that, for any horizontal $\tilde{X}_g , \tilde{Y}_g \in \mathcal{H}T_gG$, 
\begin{equation}\label{eq:pullback_of_g_and_tilde_metric_relation}
(\pi^* \mathtt{g})( \tilde{X}_g , \tilde{Y}_g ) = \tilde{\mathtt{g}}( \tilde{X}_g , \tilde{Y}_g ) \;\;\; ,
\end{equation}
which can be seen as follows:
\begin{align}
(\pi^* \mathtt{g})( \tilde{X}_g , \tilde{Y}_g ) & = \mathtt{g}( \pi_* \tilde{X}_g ,\pi_* \tilde{Y}_g ) \nonumber
\\
& = \tilde{\mathtt{g}}({\mathfrak{L}(\pi_* \tilde{X}_g)  )}_{g} , {\mathfrak{L}(\pi_* \tilde{Y}_g)  )}_{g} ) \nonumber
\\
& =  \tilde{\mathtt{g}}( \mathcal{H}\tilde{X}_g ,  \mathcal{H}\tilde{Y}_g ) \;\;\; , \nonumber
\\
& =  \tilde{\mathtt{g}}( \tilde{X}_g ,  \tilde{Y}_g ) \;\;\; ,
\end{align}
where line 2 follows from the definition of $\mathtt{g}$, and line 3 follows from~\eqref{eq:lift_identities_2}, and line 4 follows from the fact that $\tilde{X}_g$ and $\tilde{Y}_g$ are already horizontal. Following the last line in~\eqref{eq:g_inv_derivation_1}, this implies that
\begin{align}
(\sigma_g^* \mathtt{g})(X_{g'\eta} , Y_{g'\eta}) & = \tilde{\mathtt{g}}( {( {L_g}_*{  \mathfrak{L}{( X_{g'\eta} )}  }_{g'} )}_{gg'} ,  {( {L_g}_*{  \mathfrak{L}{( Y_{g'\eta} )}  }_{g'} )}_{gg'} ) \nonumber
\\
& = \tilde{\mathtt{g}}(
{  \mathfrak{L}{( X_{g'\eta} )}  }_{g'} , {  \mathfrak{L}{( Y_{g'\eta} )}  }_{g'}
) \nonumber
\\
& = \mathtt{g}(X_{g'\eta} , Y_{g'\eta}) \;\;\; ,
\end{align}
where line 2 follows from the left-invariance of $\tilde{\mathtt{g}}$, and line 3 follows from the definition of $\mathtt{g}$.
\end{proof}

$\tilde{\nabla}$ is torsionless and compatible with $\tilde{g}$. Respectively,
\begin{align}
\tilde{T}(X,Y) & =  \tilde{\nabla}_X Y - \tilde{\nabla}_Y X - [X,Y] = 0 \;\;\; , 
\\
(\tilde{\nabla}_{X}\tilde{\mathtt{g}})(Y,Z) & = X\big( \tilde{\mathtt{g}}(Y,Z) \big) - \tilde{\mathtt{g}}( \tilde{\nabla}_X Y , Z ) - \tilde{\mathtt{g}} ( Y , \tilde{\nabla}_X Z ) = 0 \;\;\; ,
\end{align}
for any $X,Y,Z \in TG$. Using the above, and the definitions of $\mathtt{g}$ and $\nabla$, we now show that
\begin{prop}
$\nabla$ is torsionless. That is, for any $X,Y \in TM$,
\begin{equation}
T(X,Y) =  \nabla_X Y - \nabla_Y X - [X,Y] = 0 \;\;\; .
\end{equation}
\end{prop}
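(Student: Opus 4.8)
The plan is to lift the computation to $G$, where torsionlessness of $\tilde{\nabla}$ is already available, and then push back down with $\pi_*$. First I would use \textbf{Def}.~\ref{def:cov_deriv_homog_def} together with the linearity of the push-forward to write, for any $X,Y\in TM$ and any $g$ in the fiber over $g\eta\in M$,
\[
(\nabla_X Y - \nabla_Y X)\big|_{g\eta} = \pi_*\!\left( \tilde{\nabla}_{\mathfrak{L}(X)}\mathfrak{L}(Y)\big|_g - \tilde{\nabla}_{\mathfrak{L}(Y)}\mathfrak{L}(X)\big|_g \right) .
\]
Since $\tilde{\nabla}$ is torsionless on $G$, the argument of $\pi_*$ is exactly $[\mathfrak{L}(X),\mathfrak{L}(Y)]\big|_g$, and this field is $R_H$-invariant, being the difference of the two $R_H$-invariant fields $\tilde{\nabla}_{\mathfrak{L}(X)}\mathfrak{L}(Y)$ and $\tilde{\nabla}_{\mathfrak{L}(Y)}\mathfrak{L}(X)$ noted earlier; hence $\pi_*$ is well defined on it and $\nabla_X Y - \nabla_Y X = \pi_*[\mathfrak{L}(X),\mathfrak{L}(Y)]$ as a vector field on $M$.

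The crux is then to identify $\pi_*[\mathfrak{L}(X),\mathfrak{L}(Y)]$ with $[X,Y]$, i.e. to check that the bracket of horizontal lifts is $\pi$-related to the bracket downstairs. I would verify this by acting on an arbitrary $f\in\Omega^0(M)$: by \eqref{eq:lift_identities_3}, $\mathfrak{L}(Y)(\pi^* f) = \pi^*(Yf)$, where $Yf\in\Omega^0(M)$ denotes the function $x\mapsto Y_x f$, and iterating gives $\mathfrak{L}(X)\big(\mathfrak{L}(Y)(\pi^* f)\big) = \pi^*\big(X(Yf)\big)$. Antisymmetrising, $[\mathfrak{L}(X),\mathfrak{L}(Y)](\pi^* f) = \pi^*\big(X(Yf) - Y(Xf)\big) = \pi^*\big([X,Y]f\big)$, and by the definition of the push-forward this reads $\big(\pi_*[\mathfrak{L}(X),\mathfrak{L}(Y)]\big)_{g\eta}(f) = \big([X,Y]f\big)(g\eta) = [X,Y]_{g\eta}(f)$. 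As this holds for all $f$, we conclude $\pi_*[\mathfrak{L}(X),\mathfrak{L}(Y)] = [X,Y]$ (which also confirms fiber-independence). Therefore $T(X,Y) = \nabla_X Y - \nabla_Y X - [X,Y] = \pi_*[\mathfrak{L}(X),\mathfrak{L}(Y)] - [X,Y] = 0$.

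The main obstacle is precisely this middle step — showing $\pi_*$ intertwines the two Lie brackets. It is the classical fact that $\pi$-related vector fields have $\pi$-related brackets, but a little care is needed on two points: that $\mathfrak{L}(X)$ is $\pi$-related to $X$ pointwise over the \emph{whole} fiber (which is \eqref{eq:lift_identities_1} read fiberwise, guaranteed by the $R_H$-invariance of the lift), and that the resulting vector on $M$ does not depend on the representative $g$. The function-theoretic computation above handles both at once, and everything else — torsionlessness of $\tilde{\nabla}$, the lift identities \eqref{eq:lift_identities_1}--\eqref{eq:lift_identities_4}, and \textbf{Def}.~\ref{def:cov_deriv_homog_def} — is quoted directly from the preceding material, so no genuinely new input is required.
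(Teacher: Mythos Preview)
Your proof is correct and follows essentially the same approach as the paper: lift to $G$ via \textbf{Def}.~\ref{def:cov_deriv_homog_def}, use torsionlessness of $\tilde{\nabla}$ to reduce to $\pi_*[\mathfrak{L}(X),\mathfrak{L}(Y)]$, and then identify this with $[X,Y]$. The only difference is cosmetic: the paper cites an external reference (\emph{Gallot--Hulin--Lafontaine}, Lemma~3.54) for the fact that $[\mathfrak{L}(X),\mathfrak{L}(Y)] - \mathfrak{L}([X,Y])$ is vertical, whereas you prove the equivalent statement $\pi_*[\mathfrak{L}(X),\mathfrak{L}(Y)] = [X,Y]$ directly via the standard $\pi$-relatedness argument on test functions --- which is arguably cleaner and more self-contained here.
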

\begin{proof}
Take any $g\in G$ and the corresponding base point $g\eta \in M$. Using the definition of $\nabla$ we have
\begin{align}
\nabla_X Y \big|_{g\eta} - \nabla_Y X \big|_{g\eta} & = \pi_* \Big(\tilde{\nabla}_{\mathfrak{L}(X)}\mathfrak{L}(Y) \big|_g - \tilde{\nabla}_{\mathfrak{L}(Y)} \mathfrak{L}(X) \big|_g \Big) \nonumber
\\
& = \pi_* \Big( [ \mathfrak{L}(X) , \mathfrak{L}(Y) ]\big|_g \Big) \;\;\; ,
\end{align}
where the last line follows from the torsionlessness of $\tilde{\nabla}$. Given that $\pi$ is a submersion, the vector 
\begin{equation}
[ \mathfrak{L}(X) , \mathfrak{L}(Y) ] - \mathfrak{L}(  [X,Y]) \; ,
\end{equation}
is vertical~\cite{gallot2004riemannian} (\textit{Lemma 3.54}), and hence it vanishes under $\pi_*$. This implies that 
\begin{align}
\pi_* \Big( [ \mathfrak{L}(X) , \mathfrak{L}(Y) ]\big|_g \Big) & = \pi_* \mathfrak{L}\big( [ \mathfrak{L}(X) , \mathfrak{L}(Y) ] \big)\big|_{g\eta} \nonumber
\\
& = [X,Y]\big|_{g\eta} \;\;\; ,
\end{align}
using~\eqref{eq:lift_identities_1} to get the last line. We have just shown that
\begin{equation}
\nabla_X Y \big|_{g\eta} - \nabla_Y X \big|_{g\eta} = [ X, Y]\big|_{g\eta} \;\;\; ,
\end{equation}
and hence $T(X,Y)=0$.
\end{proof}
Next, we show that
\begin{prop}
$\nabla$ is compatible with the metric $\mathtt{g}$. That is, for any $X,Y, Z \in TM$,
\begin{equation}\label{eq:nabla_compatible_with_g}
(\nabla_{X}\mathtt{g})(Y,Z) = X\big( \mathtt{g}(Y,Z) \big) - \mathtt{g}( \nabla_X Y , Z ) - \mathtt{g} ( Y , \nabla_X Z ) = 0 \;\;\; .
\end{equation}
\end{prop}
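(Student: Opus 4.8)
The plan is to mirror the structure of the preceding torsion-freeness proof, reducing metric compatibility on $M$ to that of $\tilde{\nabla}$ on $G$ via horizontal lifts. Fix $g\in G$ with base point $g\eta\in M$ and take $X,Y,Z\in TM$. The first step is to record that, as functions on $G$,
\[
\pi^*\big(\mathtt{g}(Y,Z)\big) = \tilde{\mathtt{g}}\big(\mathfrak{L}(Y),\mathfrak{L}(Z)\big) \;,
\]
which follows from the definition of $\mathtt{g}$ together with the observation that the right-hand side is constant along the fibres of $H$ (a consequence of the $R_H$-invariance of the horizontal lifts and the right-invariance of $\tilde{\mathtt{g}}$). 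Combining this with the lift identity~\eqref{eq:lift_identities_3} gives
\[
X\big(\mathtt{g}(Y,Z)\big)\big|_{g\eta} = \mathfrak{L}(X)\Big( \tilde{\mathtt{g}}\big(\mathfrak{L}(Y),\mathfrak{L}(Z)\big)\Big)\Big|_g \;.
\]

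Next I would treat the two connection terms. Using~\eqref{eq:pullback_of_g_and_tilde_metric_relation} (equivalently, the definition of $\mathtt{g}$) and~\eqref{eq:cov_deriv_homog_lift},
\[
\mathtt{g}\big(\nabla_X Y,Z\big)\big|_{g\eta} = \tilde{\mathtt{g}}\big(\mathfrak{L}(\nabla_X Y),\mathfrak{L}(Z)\big)\big|_g = \tilde{\mathtt{g}}\big(\mathcal{H}\tilde{\nabla}_{\mathfrak{L}(X)}\mathfrak{L}(Y),\mathfrak{L}(Z)\big)\big|_g \;,
\]
and since $\mathfrak{L}(Z)$ is horizontal while $\mathcal{H}TG$ and $\mathcal{V}TG$ are $\tilde{\mathtt{g}}$-orthogonal by construction, the projector $\mathcal{H}$ may be dropped, leaving $\tilde{\mathtt{g}}\big(\tilde{\nabla}_{\mathfrak{L}(X)}\mathfrak{L}(Y),\mathfrak{L}(Z)\big)\big|_g$. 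The identical manipulation applies to $\mathtt{g}\big(Y,\nabla_X Z\big)$. Substituting the three pieces into the left-hand side of~\eqref{eq:nabla_compatible_with_g} reproduces exactly $(\tilde{\nabla}_{\mathfrak{L}(X)}\tilde{\mathtt{g}})\big(\mathfrak{L}(Y),\mathfrak{L}(Z)\big)\big|_g$, which vanishes by compatibility of $\tilde{\nabla}$ with $\tilde{\mathtt{g}}$ on $G$.

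The only point requiring genuine care — the main obstacle — is the justification that $\mathcal{H}$ may be discarded when pairing $\tilde{\nabla}_{\mathfrak{L}(X)}\mathfrak{L}(Y)$ against the horizontal vector $\mathfrak{L}(Z)$ through $\tilde{\mathtt{g}}$, i.e. that the (possibly non-vanishing) vertical part of $\tilde{\nabla}_{\mathfrak{L}(X)}\mathfrak{L}(Y)$ makes no contribution. This is precisely the $\tilde{\mathtt{g}}$-orthogonality of the horizontal and vertical distributions, which holds because $\mathcal{H}TG$ was defined as the orthogonal complement of $\mathcal{V}TG$. Everything else is a bookkeeping exercise: transporting the three scalar quantities on $M$ up to $G$ and invoking the lift identities already established. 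Torsion-freeness and metric compatibility together identify $\nabla$ on $M$ with the Levi-Civita connection of $\mathtt{g}$, completing the claim made after \textbf{Def}.~\ref{def:cov_deriv_homog_def}.
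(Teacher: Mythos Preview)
Your proposal is correct and follows essentially the same route as the paper: lift each of the three scalar terms in $(\nabla_X\mathtt{g})(Y,Z)$ to $G$ via horizontal lifts, drop the $\mathcal{H}$ projector in the connection terms by $\tilde{\mathtt{g}}$-orthogonality of $\mathcal{H}TG$ and $\mathcal{V}TG$, and conclude from $\tilde{\nabla}\tilde{\mathtt{g}}=0$. The only cosmetic difference is that you invoke~\eqref{eq:cov_deriv_homog_lift} directly to write $\mathfrak{L}(\nabla_X Y)=\mathcal{H}\tilde{\nabla}_{\mathfrak{L}(X)}\mathfrak{L}(Y)$, whereas the paper goes through the definition $\nabla_X Y=\pi_*\tilde{\nabla}_{\mathfrak{L}(X)}\mathfrak{L}(Y)$ and the pull-back identity~\eqref{eq:pullback_of_g_and_tilde_metric_relation} before arriving at the same place.
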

\begin{proof}
We first note that, for any $g\in G$, any $X_1 , ... , X_p \in TM$, and any $T\in \mathfrak{T}^0_p M$,
\begin{equation}
\pi^* \big( T(X_1 , ..., X_p ) \big)\big|_{g\eta} = (\pi^* T)(\mathfrak{L}(X_1 ), ... ,\mathfrak{L}(X_p ) )\big|_g \;\;\; ,
\end{equation}
where the LHS is the pull-back of the function $T(X_1 , ..., X_p )$ on $M$. For any $X,Y \in TM$, we can then use~\eqref{eq:pullback_of_g_and_tilde_metric_relation} to write the pull-back of the function $\mathtt{g}(X,Y)$ on $M$ as
\begin{equation}
\pi^* \big( g(X , Y ) \big)\big|_{g\eta} = \tilde{g}(\mathfrak{L}(X ),\mathfrak{L}(Y ))\big|_g \;\;\; .
\end{equation}
In~\eqref{eq:nabla_compatible_with_g} we can then write
\begin{align}
X(\mathtt{g}(Y,Z))\big|_{g\eta} &  = \pi_* \mathfrak{L}(X)(\mathtt{g}(Y,Z))\big|_{g\eta} \nonumber
\\
& = \mathfrak{L}(X)\Big( \pi^* \big( \mathtt{g}(Y,Z) \big) \Big) \Big|_g \nonumber
\\
& = \mathfrak{L}(X)\big( \tilde{g}(\mathfrak{L}(X ),\mathfrak{L}(Y )) \big) \big|_g \;\;\; .
\end{align}
Focussing on the next term in~\eqref{eq:nabla_compatible_with_g} we have
\begin{align}
\mathtt{g}( \nabla_X Y , Z )\big|_{g\eta} & = \mathtt{g}( \pi_* \tilde{\nabla}_{\mathfrak{L}(X)} \mathfrak{L}(Y) , \pi_* \mathfrak{L}(Z) )\big|_{g\eta} \nonumber
\\
& = ( \pi^* \mathtt{g}) \big( \mathcal{H}\tilde{\nabla}_{\mathfrak{L}(X)} \mathfrak{L}(Y) , \mathfrak{L}(Z) \big)\big|_g \nonumber
\\
& = \tilde{\mathtt{g}}\big( \mathcal{H}\tilde{\nabla}_{\mathfrak{L}(X)} \mathfrak{L}(Y) , \mathfrak{L}(Z) \big)\big|_g \nonumber
\\
& = \tilde{\mathtt{g}}\big( \tilde{\nabla}_{\mathfrak{L}(X)} \mathfrak{L}(Y) , \mathfrak{L}(Z) \big)\big|_g \;\;\; ,
\end{align}
where to get line 1 we use the definition of $\nabla$ and~\eqref{eq:lift_identities_1}, to get line 3 we use~\eqref{eq:pullback_of_g_and_tilde_metric_relation}, and to get the last line we have used the fact that $\mathfrak{L}(Z)$ is horizontal, and therefore is orthogonal (with respect to the metric $\tilde{\mathtt{g}}$) to any vertical component of $\tilde{\nabla}_{\mathfrak{L}(X)} \mathfrak{L}(Y)$. For the last term in~\eqref{eq:nabla_compatible_with_g} we similarly find
\begin{equation}
\mathtt{g}(  Y , \nabla_X Z )\big|_{g\eta} = \tilde{\mathtt{g}}\big(  \mathfrak{L}(Y) ,\tilde{\nabla}_{\mathfrak{L}(X)} \mathfrak{L}(Z) \big)\big|_g  \;\;\; .
\end{equation} 
Putting it all together we find
\begin{align}
(\nabla_{X}\mathtt{g})(Y,Z)\big|_{g\eta} & = X\big( \mathtt{g}(Y,Z) \big)\big|_{g\eta} - \mathtt{g}( \nabla_X Y , Z )\big|_{g\eta} - \mathtt{g} ( Y , \nabla_X Z )\big|_{g\eta} \nonumber
\\
& = \mathfrak{L}(X)\big( \tilde{g}(\mathfrak{L}(X ),\mathfrak{L}(Y )) \big) \big|_g \nonumber
\\
& \;\;\;\;\;\; -\tilde{\mathtt{g}}\big( \tilde{\nabla}_{\mathfrak{L}(X)} \mathfrak{L}(Y) , \mathfrak{L}(Z) \big)\big|_g -\tilde{\mathtt{g}}\big(  \mathfrak{L}(Y) ,\tilde{\nabla}_{\mathfrak{L}(X)} \mathfrak{L}(Z) \big)\big|_g \nonumber
\\
& = (\tilde{\nabla}_{\mathfrak{L}(X)}\tilde{\mathtt{g}})(\mathfrak{L}(Y),\mathfrak{L}(Z))\big|_g \nonumber
\\
& = 0 \;\;\; ,
\end{align}
by the compatibility of $\tilde{\nabla}$ with $\tilde{\mathtt{g}}$.

\end{proof}
\providecommand{\href}[2]{#2}\begingroup\raggedright\endgroup

\end{document}